\newcommand{\bfm}[1]   {\mbox{\boldmath{${#1}$}}}
\newcommand{\mvec}   {\mbox{\textnormal{vec}}}
\newcommand{\mvech}   {\mbox{\textnormal{vech}}}
\newcommand{\Cov}   {\mbox{\textnormal{Cov}}}
\newcommand{\mean}   {\mbox{\textnormal{E}}}
\theoremstyle{plain} \newtheorem{lemma}{Lemma}
\theoremstyle{plain} \newtheorem{theorem}{Theorem}
\theoremstyle{remark} 
\begin{document}
\title{Tests for multivariate normality based on canonical correlations}

\author
{ \bf{M{\aa}ns Thulin}$^{1}$}
\date{} 

\maketitle

\footnotetext[1]{Department of Mathematics, Uppsala University, P.O.Box 480, 751 06 Uppsala, Sweden.\\Phone: +46(0)184713389; Fax: +46(0)184713201; E-mail: thulin@math.uu.se}


\begin{abstract}
\noindent We propose new affine invariant tests for multivariate normality, based on independence characterizations of the sample moments of the normal distribution. The test statistics are obtained using canonical correlations between sets of sample moments, generalizing the Lin-Mudholkar test for normality. The tests are compared to some popular tests based on Mardia's skewness and kurtosis measures in an extensive simulation power study and are found to offer higher power against many of the alternatives.
   \\ {\bf Keywords:} Goodness-of-fit; Kurtosis; Multivariate normality; Skewness; Test for normality.
\end{abstract}

\section{Introduction}

Many classical multivariate statistical methods are based on the assumption that the data comes from a multivariate normal distribution. Consequently, the use of such methods should be followed by an investigation of the assumption of normality. A number of tests for multivariate normality can be found in the literature, but the field has not been investigated to the same extent as have tests for univariate normality.

Let $\gamma=\mean(X-\mu)^3/\sigma^ 3$ denote the skewness of a univariate random variable $X$ and $\kappa=\mean(X-\mu)^ 4/\sigma^ 4-3$ denote its kurtosis. Both these quantities are 0 for the normal distribution but nonzero for many other distributions, and some common tests for univariate normality are therefore based on $\gamma$ and $\kappa$.

Different analog multivariate measures of skewness and kurtosis have been proposed, perhaps most notably by \citet{ma1}. Said measures have been used for various tests for multivariate normality in the last few decades. Some of these tests, in particular the tests that use Mardia's skewness and kurtosis measures as test statistics, have proved to have high power in many simulation studies \citep{he1,mm1,mm2} and new tests for normality based on multivariate skewness and kurtosis continue to be published today \citep{dh1,kth1}.

In this paper ten new tests for normality, all related to multivariate skewness or kurtosis, are proposed. Their common basis is independence characterizations of sample moments of the multivariate normal distribution.

The first five are based on the following well-known characterization: the i.i.d. variables $\bfm{X_1}, \ldots, \bfm{X_n}$ are normal if and only if the sample mean vector $\bfm{\bar{X}}=(\bar{X}_1,\bar{X}_2,\ldots,\bar{X}_p)'$ and the sample covariance matrix $\bfm{S}$ are independent.

When trying to use this characterization for tests for normality, it quickly becomes evident that it is difficult to test the independence of these two arrays. It is however, as we shall see, possible to study the covariance between the elements of the two arrays in a satisfactory manner.

Thus, assume that $\bfm{X_1}, \ldots, \bfm{X_n}$ are i.i.d. $p$-variate random variables with nonsingular covariance matrix $\bfm{\Sigma}$. Let $\bfm{\bar{X}}=(\bar{X}_1,\bar{X}_2,\ldots,\bar{X}_p)'$ be the sample mean vector and let 
\[
\bfm{S} =
\left[ \begin{array}{cccc}
S_{11} & S_{12} & \cdots & S_{1p} \\
S_{12} & S_{22} & \cdots & S_{2p} \\
\vdots & \vdots & \ddots & \vdots \\
S_{1p} & S_{2p} & \cdots & S_{pp} 
\end{array} \right]
\]
be the sample covariance matrix with $S_{ij}=(n-1)^{-1}\sum_{k=1}^n(X_{k,i}-\bar{X}_i)(X_{k,i}-\bar{X}_j)$. Define
\[
\bfm{u}=(S_{11},S_{12},\ldots,S_{1p},S_{22},S_{23},\ldots,S_{2p},S_{33},\ldots,S_{p-1,p},S_{pp})'
\]
so that $\bfm{u}$ is a vector containing the $q=p(p+1)/2$ distinct elements of $\bfm{S}$. Now, consider the covariance matrix of the vector $(\bfm{\bar{X}'},\bfm{u'})'$, in the following denoted $(\bfm{\bar{X}},\bfm{u})$:
\begin{equation}\label{covxu} \Cov((\bfm{\bar{X}},\bfm{u})) = \left[ \begin{array}{rr}
\bfm{\bfm{\Lambda}_{11}} & \bfm{\bfm{\Lambda}_{12}} \\
\bfm{\bfm{\Lambda}_{21}} & \bfm{\bfm{\Lambda}_{22}}\end{array} \right] \end{equation}
where $\bfm{\Lambda_{11}}=\Cov(\bfm{\bar{X}})$, $\bfm{\Lambda_{22}}=\Cov(\bfm{u})$, $\bfm{\Lambda_{21}}=\bfm{\bfm{\Lambda}_{12}^{'}}$ and $\bfm{\bfm{\Lambda}_{12}}$ contains covariances of the type $\Cov(\bar{X}_i, S_{jk})$, $i,j,k=1,\ldots,p$. We will use this matrix when we construct our tests.

Our main tool for doing so will be canonical correlations. This multivariate generalization of Pearson's correlation coefficient is defined as follows. The first canonical correlation between $\bfm{\bar{X}}$ and $\bfm{u}$ is the largest correlation coefficient between linear combinations of $\bfm{\bar{X}}$ and $\bfm{u}$:
\[
\lambda_1=\max_{\bfm{a}\in\mathbb{R}^p,\bfm{b}\in\mathbb{R}^q}|\rho(\bfm{a'\bar{X}},\bfm{b'u})|.
\]
The second canonical correlation $\lambda_2$ is the largest correlation coefficient between linear combinations of $\bfm{\bar{X}}$ and $\bfm{u}$ that are uncorrelated with the linear combinations corresponding to the first canonical correlation. For the third canonical correlation $\lambda_3$, the condition is that the linear combinations should be uncorrelated to the first two canonical correlations, and so on.

The canonical correlations can be obtained in different ways. We will use that $\lambda_1^2,\ldots,\lambda_p^2$ are the eigenvalues of the matrix
\begin{equation}\label{egenmatris}
\bfm{\Lambda_{11}}^{-1}\bfm{\Lambda_{12}}\bfm{\Lambda_{22}}^{-1}\bfm{\Lambda_{21}},
\end{equation}
see for instance \citet{ma3}.

Similarly to what was done above for the covariance, let
\[
S_{ijk}=\frac{n}{(n-1)(n-2)}\sum_{r=1}^n(X_{r,i}-\bar{X}_i)(X_{r,j}-\bar{X}_j)(X_{r,k}-\bar{X}_k)
\]
and
\[ \bfm{v}=(S_{111},S_{112},\ldots,S_{pp(p-1)},S_{ppp})',\]
a vector of length $p+p(p-1)+p(p-1)(p-2)/6$.
The five other tests are based on the fact that $\bfm{\bar{X}}$ and $\bfm{v}$ are independent if $\bfm{X_1}, \ldots, \bfm{X_n}$ are normal. The covariance matrix of $(\bfm{\bar{X}},\bfm{v})$ can be written as
\begin{equation}\label{covxv} \Cov((\bfm{\bar{X}},\bfm{v})) = \left[ \begin{array}{rr}
\bfm{\bfm{\Psi}_{11}} & \bfm{\bfm{\Psi}_{12}} \\
\bfm{\bfm{\Psi}_{21}} & \bfm{\bfm{\Psi}_{22}}\end{array} \right] \end{equation}
where $\bfm{\Psi_{11}}=\Cov(\bfm{\bar{X}})$, $\bfm{\Psi_{22}}=\Cov(\bfm{v})$, $\bfm{\Psi_{21}}=\bfm{\bfm{\Psi}_{12}^{'}}$ and $\bfm{\bfm{\Psi}_{12}}$ contains covariances of the type $\Cov(\bar{X}_i, S_{jkl})$, $i,j,k,l=1,\ldots,p$.

In Section \ref{covexakt} we state explicit expressions for the elements of $\Cov((\bfm{\bar{X}},\bfm{u}))$ and $\Cov((\bfm{\bar{X}},\bfm{v}))$ in terms of moments of $\bfm{X}=(X_1,\ldots,X_p)$. In Section \ref{skewtester} we reexamine Mardia's measure of multivariate skewness in the light of the results regarding $\Cov((\bfm{\bar{X}},\bfm{u}))$ and propose new tests for normality, all related to multivariate skewness. These can be viewed as multivariate generalizations of the $Z_2'$ modification \citep{er1} of the $Z_2$ test introduced by \citet{lm1}. In Section \ref{kurttester} we use the expressions for $\Cov((\bfm{\bar{X}},\bfm{v}))$ to construct new tests for normality, related to multivariate kurtosis. These, in turn, are generalizations of the $Z_3'$ modification of the $Z_3$ test proposed by \citet{mud2}. The results of a simulation study comparing the new tests with tests based on Mardia's skewness and kurtosis measures is presented in Section \ref{skevsim}. The text concludes with an appendix containing proofs and tables.

\section{Explicit expressions for the covariances}
\subsection{Elements of matrices}\label{covexakt}
In the following theorems we state explicit expressions for the elements of $\Cov((\bfm{\bar{X}},\bfm{u}))$ and $\Cov((\bfm{\bar{X}},\bfm{v}))$ in terms of moments of $(X_1,\ldots,X_p)$. These covariances can be obtained by tedious but routine calculations of the moments involved, that are much simplified by the use of tensor notation, as described in \citet{mcc2}. All five covariances can be found scattered in the literature, expressed using cumulants: (\ref{kovarians0})-(\ref{kovarianscov}) are all given in Section 4.2.3 of \citet{mcc2}, (\ref{psi12}) is found in Problem 4.5 of \cite{mcc2} and (\ref{psi22}) is expression (7) in \cite{ka1}.

\begin{theorem}\label{covthm}
Assume that $\bfm{X_1}, \ldots, \bfm{X_n}$ are i.i.d. $p$-variate random variables with\\ $\mean|X_iX_jX_kX_l|<\infty$ for $i,j,k,l=1,2,\ldots,p$. Let $\mu_{i_1,\ldots,i_s}=\mean (X_{i_1}-\mu_{i_1})(X_{i_2}-\mu_{i_2})\cdots(X_{i_s}-\mu_{i_s})$. Then, for $n\geq 2p+p(p-1)/2$ and $i,j,k,l=1,2,\ldots,p$
\begin{itemize}
\item[(i)] the elements of $\bfm{\Lambda}_{11}$ are
\begin{equation}\label{kovarians0}
\begin{split}
\Cov(\bar{X}_i, \bar{X}_j)&=\frac{1}{n}\mu_{ij},
\end{split}\end{equation}
\item[(ii)] the elements of $\bfm{\Lambda}_{12}$ and $\bfm{\Lambda}_{21}$ are
\begin{equation}\label{kovarians1}
\begin{split}
\Cov(\bar{X}_i, S_{jk})&=\frac{1}{n}\mu_{ijk}
\end{split}\end{equation}
and
\item[(iii)] the elements of $\bfm{\Lambda}_{22}$ are
\begin{equation}\label{kovarianscov}
\begin{split}
\Cov(S_{ij}, S_{kl})=&\frac{1}{n}(\mu_{ijkl}-\mu_{ij}\mu_{kl})+\frac{1}{n(n-1)}(\mu_{ik}\mu_{jl}+\mu_{il}\mu_{jk}).
\end{split}\end{equation}
\end{itemize}
\end{theorem}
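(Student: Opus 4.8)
The plan is to reduce everything to raw product moments and then carry out careful index bookkeeping. Since adding a constant vector to every $\bfm{X_k}$ shifts $\bfm{\bar{X}}$ by the same constant but leaves every covariance (and every $S_{jk}$) unchanged, I may assume without loss of generality that $\mu_i=0$ for all $i$; then $\mu_{i_1\cdots i_s}=\mean(X_{i_1}\cdots X_{i_s})$ are ordinary product moments and $\mean(\bar{X}_i)=0$. The engine behind all three parts is the same: write each sample moment as a multilinear sum over the sample index, expand the covariance into a sum of expectations of products of the $X_{r,\cdot}$, and use that the $\bfm{X_r}$ are i.i.d., so the expectation factorizes across any set of mutually distinct sample indices. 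Because the data are centered, any factor carrying a sample index that occurs exactly once has mean zero and annihilates its whole term; hence only those coincidence patterns in which every distinct sample index is repeated at least twice survive, and the answer is obtained by counting the multiplicity of each surviving pattern.

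Part (i) is standard: $\Cov(\bar{X}_i,\bar{X}_j)=n^{-2}\sum_{r,s}\Cov(X_{r,i},X_{s,j})$, only the $n$ diagonal terms $r=s$ survive, and this gives $n^{-1}\mu_{ij}$. For part (ii) I would first record the algebraic identity $(n-1)S_{jk}=\sum_r X_{r,j}X_{r,k}-n\bar{X}_j\bar{X}_k$, valid for any data. Taking $\Cov(\bar{X}_i,\cdot)$ of the two pieces: in $\Cov(\bar{X}_i,\sum_r X_{r,j}X_{r,k})$ only the term with all three sample indices equal survives and yields $\mu_{ijk}$, while in $\Cov(\bar{X}_i,n\bar{X}_j\bar{X}_k)$ the sole surviving pattern is again all indices equal, contributing $n^{-1}\mu_{ijk}$. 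Subtracting gives $(n-1)\Cov(\bar{X}_i,S_{jk})=\frac{n-1}{n}\mu_{ijk}$, and division yields (ii).

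Part (iii) is the crux, and where the bookkeeping becomes delicate. Using the same identity for both $S_{ij}$ and $S_{kl}$, I would write $(n-1)^2\Cov(S_{ij},S_{kl})$ as a combination of four covariances built from $A=\sum_r X_{r,i}X_{r,j}$, $B=\bar{X}_i\bar{X}_j$, $C=\sum_s X_{s,k}X_{s,l}$ and $D=\bar{X}_k\bar{X}_l$, namely $\Cov(A,C)-n\Cov(A,D)-n\Cov(B,C)+n^2\Cov(B,D)$. Three of these involve only a single fourth-order moment together with $\mu_{ij}\mu_{kl}$, but the last one, $\Cov(B,D)$, is a quadruple sum over four independent sample indices and is the real source of difficulty: besides the all-equal pattern it admits the three distinct ways of splitting the four positions into two repeated pairs, and these produce precisely the three products $\mu_{ij}\mu_{kl}$, $\mu_{ik}\mu_{jl}$ and $\mu_{il}\mu_{jk}$. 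Assembling the four pieces and collecting coefficients, one finds the $\mu_{ijkl}$ and $\mu_{ij}\mu_{kl}$ coefficients equal to $\pm(n-1)^2/n$ (using $n-2+n^{-1}=(n-1)^2/n$) while the genuinely new terms $\mu_{ik}\mu_{jl}+\mu_{il}\mu_{jk}$ appear with coefficient $(n-1)/n$; dividing through by $(n-1)^2$ produces the stated expression.

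The step I expect to be the main obstacle is the multiplicity count for the mixed pairings in $\Cov(B,D)$, together with the analogous but simpler counts in $\Cov(A,D)$ and $\Cov(B,C)$, since one must verify that no term with an isolated sample index has been retained and that each admissible pairing carries its correct factor of $n(n-1)$. This is exactly the bookkeeping that the tensor and cumulant formalism of \citet{mcc2} is designed to automate, but the elementary partition-counting argument sketched above is self-contained and suffices.
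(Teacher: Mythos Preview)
Your argument is correct. The centering reduction, the decomposition $(n-1)S_{jk}=\sum_r X_{r,j}X_{r,k}-n\bar X_j\bar X_k$, the case analysis for $\Cov(B,D)$ yielding the three pairings $\mu_{ij}\mu_{kl}$, $\mu_{ik}\mu_{jl}$, $\mu_{il}\mu_{jk}$ with multiplicities $n$ and $n(n-1)$, and the final collection via $n-2+n^{-1}=(n-1)^2/n$ all check out and give exactly the stated formula.

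The paper itself does not actually prove Theorem~\ref{covthm}: it simply remarks that the identities are classical, can be obtained by ``tedious but routine'' moment calculations, and refers to Section~4.2.3 of \citet{mcc2}, where they appear in tensor/cumulant form. So your elementary partition-counting proof is a genuine addition relative to what the paper supplies. The trade-off is the obvious one: McCullagh's cumulant formalism automates the coincidence bookkeeping and scales cleanly to the sixth-order expressions needed for Theorem~\ref{covthm2}, whereas your bare-hands computation is self-contained and avoids any external machinery but would become painful for~(\ref{psi22}).
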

\noindent
Since $\bfm{\Psi}_{11}=\bfm{\Lambda}_{11}$, we only give the expressions for $\bfm{\Psi_{22}}$ and $\bfm{\Psi_{12}}$ in the following theorem.

\begin{theorem}\label{covthm2}
Assume that $\bfm{X_1}, \ldots, \bfm{X_n}$ are i.i.d. $p$-variate random variables with\\ $\mean|X_\alpha X_\beta X_\gamma X_\delta X_\epsilon X_\zeta|<\infty$ for $\alpha,\ldots,\zeta=1,2,\ldots,p$. Let $\mu_{i_1,\ldots,i_s}=\mean (X_{i_1}-\mu_{i_1})(X_{i_2}-\mu_{i_2})\cdots(X_{i_s}-\mu_{i_s})$. Then, for $n\geq 2p+p(p-1)+p(p-1)(p-2)/6$ and $i,j,k,r,s,t=1,2,\ldots,p$
\begin{itemize}
\item[(i)] the elements of $\bfm{\Psi}_{12}$ and $\bfm{\Psi}_{21}$ are
\begin{equation}\label{psi12}
\begin{split}
\Cov(\bar{X}_i, S_{rst})&=\frac{1}{n}\Big{(}\mu_{irst}-\mu_{ir}\mu_{st}-\mu_{is}\mu_{rt}-\mu_{it}\mu_{rs}\Big{)}
\end{split}\end{equation}
and
\item[(ii)] the elements of $\bfm{\Psi}_{22}$ are
\begin{equation}\label{psi22}
\begin{split}
\Cov(S_{ijk}, S_{rst})=\frac{1}{n}\lambda_{ijkrst} +\frac{1}{n-1}\Big{(}\sum^9\mu_{ir}(\mu_{jkst}&-\sum^3\mu_{jk}\mu_{st})+\sum^9\mu_{ijr}\mu_{kst}\Big{)}\\
&+\frac{n}{(n-1)(n-2)}\sum^6\mu_{ir}\mu_{js}\mu_{kt}
\end{split}\end{equation}
where $\lambda_{ijkrst}$ is given below and $\sum^k$ denotes summation over $k$ distinct permutations of $i,j,k,r,s,t$. In particular, in $\sum^9\mu_{ir}(\ldots)$ the summation is taken over all permutations of $i,j,k,r,s,t$ where $i$ and either of $j,k$ switch places and/or $r$ and either of $s,t$ switch places. In $\sum^9\mu_{ijr}\mu_{kst}$ the summation is taken over all permutations except $\mu_{ijk}\mu_{rst}$. Finally, in $\sum^3\mu_{jk}\mu_{st}$ and
\[
\lambda_{ijkrst}=\mu_{ijkrst}-\sum^{15}\mu_{ij}(\mu_{krst}-\sum^3\mu_{kr}\mu_{st})-\sum^{10}\mu_{ijk}\mu_{rst}-\sum^{15}\mu_{ij}\mu_{kr}\mu_{st}
\]
the sums are taken over all distinct permutations.
\end{itemize}
\end{theorem}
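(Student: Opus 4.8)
The plan is to compute each entry as a single expectation and then reduce everything to standard moment bookkeeping. First I would observe that $S_{rst}$ and $S_{ijk}$ are functions of the centred data $X_{a,i}-\bar X_i$ and hence invariant under a common translation, whereas replacing $\bar X_i$ by $\bar X_i-\mu_i$ changes neither covariance; so without loss of generality I would set $\mu=0$, after which every raw moment coincides with the central moment $\mu_{i_1\ldots i_s}$ and the expectations $\mean\bar X_i$, $\mean S_{rst}$, $\mean S_{ijk}$ all vanish. Consequently $\Cov(\bar X_i,S_{rst})=\mean[\bar X_i S_{rst}]$ and $\Cov(S_{ijk},S_{rst})=\mean[S_{ijk}S_{rst}]-\mean[S_{ijk}]\mean[S_{rst}]$, so only expectations of products must be evaluated, and the assumed finiteness of the sixth absolute moments guarantees these exist. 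It is worth recording at the outset that, with the chosen normalisation $n/((n-1)(n-2))$, the arrays $\bar X_i$, $S_{ij}$ and $S_{ijk}$ are exactly the multivariate Fisher $k$-statistics, unbiased for the cumulants; this is precisely what makes the final answers collapse into cumulant form.

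The computational engine is the expansion of the centred products followed by termwise expectation. I would write each factor as $X_{a,r}-\bar X_r=X_{a,r}-n^{-1}\sum_b X_{b,r}$, multiply out the three (respectively six) factors, and sum over the observation indices. Because the $\bfm{X_a}$ are i.i.d., the expectation of any resulting monomial $\mean\!\big[\prod_\nu X_{a_\nu,\cdot}\big]$ factorises according to the set partition of the observation indices $a_\nu$ into their coincidence classes: a class of size $m$ contributes the joint central moment of the $m$ component indices attached to it, and distinct classes multiply. Summing over partitions, each partition is realised by a number of ordered index-tuples that is a polynomial in $n$, and these counts are what produce the weights $1/n$, $1/(n-1)$ and $n/((n-1)(n-2))$ once one divides by $(n-1)(n-2)$ and by the factor $n^{-1}$ carried by each $\bar X$. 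For the $\Psi_{12}$ entry only four observation indices are in play and the calculation is short, giving $\mean[\bar X_i S_{rst}]=n^{-1}\kappa_{irst}$ with $\kappa_{irst}$ the fourth cumulant; for $\Psi_{22}$ up to six indices appear and the same scheme yields a leading $n^{-1}$ term equal to the sixth cumulant together with lower-order terms indexed by coarser partitions.

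The final step is to translate cumulants into central moments via the moment--cumulant relations. For (i) this is immediate, since $\kappa_{irst}=\mu_{irst}-\mu_{ir}\mu_{st}-\mu_{is}\mu_{rt}-\mu_{it}\mu_{rs}$, which is exactly \eqref{psi12}. For (ii), the quantity $\lambda_{ijkrst}$ in the statement is precisely the sixth cumulant written in central moments, so one checks that the $(2,4)$-, $(3,3)$- and $(2,2,2)$-partition terms occur with the displayed coefficients; in particular the net weight of each three-pair partition is $+2$, obtained once the inner $\sum^3$ inside $\sum^{15}\mu_{ij}(\cdots)$ is combined with the final $-\sum^{15}\mu_{ij}\mu_{kr}\mu_{st}$. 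The main obstacle is purely combinatorial: organising the cross terms of $\mean[S_{ijk}S_{rst}]$ so that they collapse into the restricted permutation sums $\sum^9\mu_{ir}(\mu_{jkst}-\sum^3\mu_{jk}\mu_{st})$, $\sum^9\mu_{ijr}\mu_{kst}$ and $\sum^6\mu_{ir}\mu_{js}\mu_{kt}$ with exactly the index ranges described after \eqref{psi22} — that the $\mu_{ir}$-sum runs only over the nine permutations in which $i$ swaps with one of $j,k$ and/or $r$ with one of $s,t$, that $\mu_{ijk}\mu_{rst}$ is excluded from $\sum^9\mu_{ijr}\mu_{kst}$, and that the leading three-pair term keeps only the six cross-matchings $\mu_{ir}\mu_{js}\mu_{kt}$. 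Keeping track of which component indices are forced together by a given partition of observation indices is where the tensor notation of \citet{mcc2} is indispensable, reducing the bookkeeping to counting symmetry classes of permutations rather than writing out individual terms.
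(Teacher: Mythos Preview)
Your approach is essentially the one the paper points to: the paper does not actually prove Theorem~\ref{covthm2} but remarks that the covariances ``can be obtained by tedious but routine calculations of the moments involved, that are much simplified by the use of tensor notation, as described in \citet{mcc2}'', and then cites Problem~4.5 of \citet{mcc2} for \eqref{psi12} and expression~(7) of \citet{ka1} for \eqref{psi22}. Your recognition of $\bar X_i,S_{ij},S_{ijk}$ as the multivariate $k$-statistics, your partition-of-observation-indices bookkeeping, and your final moment--cumulant translation are exactly the machinery those references use, so there is no methodological daylight between your proposal and what the paper invokes.

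One small slip to fix: after translating to $\mu=0$ you assert that $\mean S_{rst}$ and $\mean S_{ijk}$ vanish. They do not --- $S_{rst}$ is unbiased for the third cumulant $\kappa_{rst}=\mu_{rst}$, which is generally nonzero. This is harmless for part~(i) because $\mean\bar X_i=0$ alone already gives $\Cov(\bar X_i,S_{rst})=\mean[\bar X_i S_{rst}]$, and you correctly retain the $-\,\mean[S_{ijk}]\mean[S_{rst}]$ term in part~(ii), so the computation is unaffected; just delete the erroneous parenthetical claim.
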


\subsection{Covariances expressed using multivariate moments}\label{multmom}
To express the covariances above using matrices of multivariate moments, we will need some tools from matrix algebra, namely the Kronecker product $\otimes$, the $\mvec$ operator and a commutation matrix, all of which are described in \citet{rosen1}. See also \citet{kollo1}.

For a $p\times q$ matrix $\bfm{A}=\{a_{ij}\}$ and an $r\times s$ matrix $\bfm{B}$, the Kronecker product $\bfm{A}\otimes\bfm{B}$ is the $pr\times qs$ matrix $\{a_{ij}\bfm{B}\}$, $i=1,\ldots,p$, $j=1,\ldots,q$. The $\mvec$ operator stacks the columns of a matrix underneath eachother, forming a single vector. If the columns of the $p\times q$ matrix $\bfm{A}$ are denoted $\bfm{a_1},\ldots,\bfm{a_q}$ then $\mvec(\bfm{A})=(\bfm{a_1'},\ldots,\bfm{a_q'})'$ is a vector of length $pq$.

Let $\bfm{\mu}$ be the mean vector and $\bfm{\Sigma}$ be the covariance matrix of a random variable $\bfm{X}$. The third central moments of $\bfm{X}$ can be written as the $p\times p^2$ matrix
\[
\bar{m}_3(\bfm{X})=\mean\Big{\lbrack}(\bfm{X}-\bfm{\mu})\otimes(\bfm{X}-\bfm{\mu})'\otimes(\bfm{X}-\bfm{\mu}) \Big{\rbrack}'
\]
which contains all third order moments of $\bfm{X}$. Similarly, all fourth order moments of $\bfm{X}$ are found in the symmetric $p^2\times p^2$ matrix
\[
\bar{m}_4(\bfm{X})=\mean\Big{\lbrack}(\bfm{X}-\bfm{\mu})(\bfm{X}-\bfm{\mu})'\otimes(\bfm{X}-\bfm{\mu})(\bfm{X}-\bfm{\mu})' \Big{\rbrack},
\]
which we refer to as the fourth central moment of $\bfm{X}$. Finally, let $\bfm{K}_{pp}$ be the $p^2\times p^2$ commutation matrix, which consists of $p^2$ blocks of size $p\times p$, such that in the $ij$th block all elements equal 0 except for element $ji$, which equals 1. Thus, for instance, when $p=3$ the commutation matrix is
\[ \bfm{K}_{33}=\left( \begin{array}{ccc|ccc|ccc}
1&  0&  0&  0&  0&  0&  0&  0&  0\\
0&  0&  0&  1&  0&  0&  0&  0&  0\\
0&  0&  0&  0&  0&  0&  1&  0&  0\\\hline
0&  1&  0&  0&  0&  0&  0&  0&  0\\
0&  0&  0&  0&  1&  0&  0&  0&  0\\
0&  0&  0&  0&  0&  0&  0&  1&  0\\\hline
0&  0&  1&  0&  0&  0&  0&  0&  0\\
0&  0&  0&  0&  0&  1&  0&  0&  0\\
0&  0&  0&  0&  0&  0&  0&  0&  1 \end{array} \right).\] 
The third and fourth central moments can now be used to give explicit expression for the covariance matrices in Theorem \ref{covthm}. It turns out that it is helpful to use the vector
\[
\bfm{\nu}=\mvec(\bfm{S})=(S_{11},S_{12},\ldots,S_{1p},S_{12},S_{22},S_{23},\ldots,S_{p-1,p},S_{pp})'
\]
instead of $\bfm{u}$ when writing these expressions, bearing in mind that we wish to study the canonical correlations between $\bfm{\bar{X}}$ and $\bfm{u}$ and not the covariance matrices themselves. Those canonical correlations are the same as the canonical correlations between $\bfm{\bar{X}}$ and $\bfm{\nu}$. To see this, note that for every linear combination $\bfm{a'}\bfm{u}$ there exists a $\bfm{b}$ so that $\bfm{b'}\bfm{\nu}=\bfm{a'}\bfm{u}$ and therefore, by the definition of canonical correlations, the canonical correlations must coincide.

We partition $\Cov((\bfm{\bar{X}},\bfm{\nu}))$ in the same fashion as (\ref{covxu}):
\[ \Cov((\bfm{\bar{X}},\bfm{\nu})) = \left[ \begin{array}{rr}
\bfm{\Lambda}_{11}^{\nu} & \bfm{\Lambda}_{12}^{\nu} \\
\bfm{\Lambda}_{21}^{\nu} & \bfm{\Lambda}_{22}^{\nu}\end{array} \right]. \]
By direct computation, it can be seen that
\[\begin{split}
&\bfm{\Lambda}_{11}^{\nu}=\frac{1}{n}\bfm{\Sigma},\\
&\bfm{\Lambda}_{12}^{\nu}=\frac{1}{n}\bar{m}_3(\bfm{X})\qquad\mbox{ and}\\
&\bfm{\Lambda}_{22}^{\nu}=\frac{1}{n}\Big{(}\bar{m}_4(\bfm{X})-\mvec(\bfm{\Sigma})\mvec(\bfm{\Sigma})'\Big{)}+\frac{1}{n(n-1)}(\bfm{I}_{p^2}+\bfm{K}_{pp})(\bfm{S}\otimes\bfm{S}).
\end{split}\]
Although the above expressions provide a nice description of the covariance structure, they are of less interest when it comes to computational matters. The fact that $\bfm{\nu}$ contains the same elements more than once means that $\bfm{\Lambda}_{22}^{\nu}$ is singular and thus not invertible. Hence, we can't study the matrix ${\bfm{\Lambda}_{11}^{\nu}}^{-1}\bfm{\Lambda}_{12}{\bfm{\Lambda}_{22}^{\nu}}^{-1}\bfm{\Lambda}_{21}$, but must use (\ref{egenmatris}) instead. Similar, but less revealing, expressions can be obtained for $\bfm{\Psi_{22}}$ and $\bfm{\Psi_{12}}$. The same problem occurs in that case, meaning that the partitioning (\ref{covxv}) is of greater use.


\section{Tests based on $\bfm{\bar{X}}$ and $\bfm{u}$}\label{skewtester}
\subsection{Mardia's multivariate skewness measure $\bfm{\beta_{1,p}}$ revisited}\label{mardiatest}
\citet{ma1,ma2} noted that for univariate random variables, asymptotically $\rho(\bar{X}, S^2)\approx \frac{1}{\sqrt{2}}\gamma$ if $\kappa$ is assumed to be negligible. Based on this, he used $\Cov((\bfm{\bar{X}},\bfm{u}))$ to construct a multivariate skewness measure. Studying the canonical correlations between $\bfm{\bar{X}}$ and $\bfm{u}$ he proposed the measure
\[
\beta_{1,p}=2\sum_{i=1}^ p\lambda_i^2
\]
where $\lambda_1,\ldots,\lambda_p$ are the canonical correlations. This expression reduces to $2\rho(\bar{X}, S^2)^ 2\approx\gamma^2$ for univariate random variables.

From the theory of canonical correlations we have that $\lambda_1^ 2,\ldots,\lambda_p^ 2$ are the eigenvalues of $\bfm{\Lambda_{11}}^{-1}\bfm{\Lambda_{12}}\bfm{\Lambda_{22}}^{-1}\bfm{\Lambda_{21}}$ and thus
\[
\beta_{1,p}=2tr(\bfm{\Lambda_{11}}^{-1}\bfm{\Lambda_{12}}\bfm{\Lambda_{22}}^{-1}\bfm{\Lambda_{21}}).
\]
Taking these moments to order $n^{-1}$ Mardia showed that
\[
\beta_{1,p}\approx \mean\Big{(}(\bfm{X}-\bfm{\mu})'\bfm{\Sigma}^{-1}(\bfm{Y}-\bfm{\mu})\Big{)}^ 3
\]
where $\bfm{X}$ and $\bfm{Y}$ are independent and identical random vectors. The sample counterpart of the above expression,
\begin{equation}\label{b1p}
b_{1,p}=\frac{1}{n^2}\sum_{i,j=1}^n\Big{(}(\bfm{X_i}-\bfm{\bar{X}})'\bfm{S^{-1}}(\bfm{X_j}-\bfm{\bar{X}}\Big{)}^3,
\end{equation}
is commonly used as a measure for multivariate skewness and as a test statistic for a test for multivariate normality.

In Section 2.8 of \cite{mcc2} Mardia's approximation of $\beta_{1,p}$ is shown to be a natural generalization of $\gamma^2$. It is however not necessarily a good approximation of the canonical correlations between $\bfm{\bar{X}}$ and $\bfm{u}$. An important assumption underlying Mardia's skewness measure is that the fourth moments of the distribution are negligible. Seeing as $\gamma^2-2\leq\kappa$; see \citet{dm1}; this seems like a rather strong condition. \citet{er1} noted that
\[
\rho_2=\rho(\bar{X}, S^2)=\frac{\gamma}{\sqrt{\kappa+3-\frac{n-3}{n-1}}}
\]
and used $\hat{\rho}_2=Z_2'$, the sample moment version of this quantity, as a test statistic for a test for normality, proposing a test that is a modified version of the $Z_2$ test \citep{lm1}. This statistic is similar to the sample skewness $\hat{\gamma}$ (commonly known as $\sqrt{b_1}$), which is often used for univariate tests for normality. The $Z_2'$ and $\hat{\gamma}$ tests are both clearly sensitive to deviations from normality in the form of skewness, but the $Z_2'$ test has the additional benefit that it also takes the relationship between the skewness and the kurtosis into account. In Thulins's simulation power study, $Z_2'$ was more powerful than $\hat{\gamma}$  against most of the alternatives under study.  It is therefore of interest to use Mardia's approach without any approximations, so as to include the fourth moments as well, in the hope that this will render a more powerful test for normality. The explicit expressions for $\Cov(\bar{X}_i, S_{jk})$ and $\Cov(S_{ij}, S_{kl})$ given in Theorem \ref{covthm} allows us to study $\bfm{\Lambda_{11}}^{-1}\bfm{\Lambda_{12}}\bfm{\Lambda_{22}}^{-1}\bfm{\Lambda_{21}}$ without approximations and to construct new test statistics.

It should be noted that differences in index notation complicate the situation somewhat here. Mardia's skewness is denoted $b_{1,p}$, with 1 as its index, whereas the univariate correlation statistic $Z_2'$ has 2 as its index. When generalizing $Z_2'$ to the multivariate setting we will keep the index 2, hoping that it won't be confused with Mardia's kurtosis measure $b_{2,p}$.

The factor 2 in the expression
\[
\beta_{1,p}=2tr(\bfm{\Lambda_{11}}^{-1}\bfm{\Lambda_{12}}\bfm{\Lambda_{22}}^{-1}\bfm{\Lambda_{21}})
\]
makes no real sense if we do not assume negligible fourth moments. We will therefore omit it in the following and instead study the quantity
\[
tr(\bfm{\Lambda_{11}}^{-1}\bfm{\Lambda_{12}}\bfm{\Lambda_{22}}^{-1}\bfm{\Lambda_{21}}).
\]
Let $\bfm{S_{11}}$, $\bfm{S_{22}}$, $\bfm{S_{12}}$ and $\bfm{S_{21}}$ be the sample counterparts of $\bfm{\Lambda_{11}}$, $\bfm{\Lambda_{22}}$, $\bfm{\Lambda_{12}}$ and $\bfm{\Lambda_{21}}$, where $\mu_{i_1,\ldots,i_s}=\mean (X_{i_1}-\mu_{i_1})(X_{i_2}-\mu_{i_2})\ldots(X_{i_s}-\mu_{i_s})$ are estimated by the sample moments
\begin{equation}\label{muest}
m_{i_1,\ldots,i_s}=n^{-1}\sum_{k=1}^n (x_{k,i_1}-\bar{x}_{i_1})(x_{k,i_2}-\bar{x}_{i_2})\ldots(x_{k,i_s}-\bar{x}_{i_s}).
\end{equation}
The test statistic for the new test is
\begin{equation}\label{rho2p}
Z_{2,p}^{(HL)}=tr(\bfm{S_{11}}^{-1}\bfm{S_{12}}\bfm{S_{22}}^{-1}\bfm{S_{21}}).
\end{equation}
The null hypothesis of normality is rejected if $Z_{2,p}^{(HL)}$ is sufficiently large.

$Z_{2,1}^{(HL)}$ coincides with $Z_2'^2$ from \citet{er1} and is thus equivalent with the $|Z_2'|$ test presented there. $Z_{2,2}^{(HL)}$ is a polynomial of degree 10 in 13 moments and the full formula takes up more than two pages. It is however readily computed using a computer, as is $Z_{2,p}^{(HL)}$ for higher $p$.

\subsection{Other test statistics from the theory for canonical correlations}
Let $\bfm{Y}$ and $\bfm{Z}$ be normal random vectors with
\[ \Cov((\bfm{Y},\bfm{Z})) = \left[ \begin{array}{rr}
\bfm{\bfm{\Sigma}_{11}} & \bfm{\bfm{\Sigma}_{12}} \\
\bfm{\bfm{\Sigma}_{21}} & \bfm{\bfm{\Sigma}_{22}}\end{array} \right] \]
partitioned like (\ref{covxu}). Let $\bfm{\hat{\Sigma}_{11}}$, $\bfm{\hat{\Sigma}_{22}}$ and $\bfm{\hat{\Sigma}_{12}}=\bfm{\hat{\Sigma}_{21}'}$ be the sample covariance matrices and $\hat{\nu}_1^2,\ldots,\hat{\nu}_p^2$ be the eigenvalues of $\bfm{\hat{\Sigma}_{11}}^{-1}\bfm{\hat{\Sigma}_{12}}\bfm{\hat{\Sigma}_{22}}^{-1}\bfm{\hat{\Sigma}_{21}}$. In Section 10.3 of \citet{ksh1} the test statistic of the likelihood ratio test of $H_0: \bfm{\Sigma_{12}}=\bfm{0}$ versus $H_1: \bfm{\Sigma_{12}}\neq\bfm{0}$ is shown to be
\begin{equation}\label{mlcc}
-n\log\prod_{i=1}^p(1-\hat{\nu}_i^2).
\end{equation}
Now, let $\hat{\lambda}_1^2\geq\hat{\lambda}_2^2\geq\ldots\geq\hat{\lambda}_p^2$ be the eigenvalues of $\bfm{S_{11}}^{-1}\bfm{S_{12}}\bfm{S_{22}}^{-1}\bfm{S_{21}}$. Assuming that the necessary moments exist, $\bfm{\bar{X}}$ and $\bfm{u}$ are asymptotically normal. Although $\bfm{S_{22}}$ and $\bfm{S_{12}}$ are not the usual sample covariance matrices, in the light of (\ref{mlcc}), this suggests the use of the following statistic for a test for normality: 
\begin{equation}\label{r2p}
Z_{2,p}^{(W)}=\prod_{i=1}^p(1-\hat{\lambda}_i^2).
\end{equation}
The null hypothesis of normality is rejected if $Z_{2,p}^{(W)}$ is sufficiently small.

Another quantity that has been considered for a test of $H_0: \bfm{\Sigma_{12}}=\bfm{0}$, for instance by \citet{ba1}, is
\begin{equation}\label{omega2p}
Z_{2,p}^{(PB)}=\sum_{i=1}^p\frac{\hat{\lambda}_i^2}{1-\hat{\lambda}_i^2}.
\end{equation}
$Z_{2,p}^{(PB)}$ is similar to $Z_{2,p}^{(HL)}$, but weighs the correlation coefficients so that larger coefficients become more influential. The null hypothesis should be rejected for large values of $Z_{2,p}^{(PB)}$.

Finally, we can consider the statistics
\begin{equation}
Z_{2,p}^{(max)}=\max(\hat{\lambda}_1^2,\ldots,\hat{\lambda}_p^2)=\hat{\lambda}_1^2,
\end{equation}
and
\begin{equation}
Z_{2,p}^{(min)}=\min(\hat{\lambda}_1^2,\ldots,\hat{\lambda}_p^2)=\hat{\lambda}_p^2,
\end{equation}
large values of which imply non-normality. $Z_{2,p}^{(max)}$ seems perhaps like the most natural choice for a test statistic, as $\lambda_1=0$ implies that all canonical correlations are 0.

The statistics $Z_{2,p}^{(HL)}$, $Z_{2,p}^{(W)}$, $Z_{2,p}^{(PB)}$, $Z_{2,p}^{(max)}$ and $Z_{2,p}^{(min)}$ are all related to well-known statistics from multivariate analysis of variance; they are analogs to the Hotelling-Lawley trace, Wilk's $\Lambda$, the Pillai-Bartlett trace and Roy's greatest and smallest root, respectively. For $p=1$ these statistics are all equivalent to the $|Z_2'|$ test from \citet{er1}.

\subsection{Theoretical results}
Some fundamental properties of the new test statistics are presented in the following theorem. Its proof is given in the Appendix.

\begin{theorem}\label{rho2thm}
Assume that $\bfm{X_1}, \ldots, \bfm{X_n}$ are i.i.d. $p$-variate random variables fulfilling the conditions of Theorem \ref{covthm}. Then, for $n\geq 2p+p(p-1)/2$ and $i,j,k=1,2,\ldots,p$
\begin{itemize}
\item[(i)] $Z_{2,p}^{(HL)}$, $Z_{2,p}^{(W)}$, $Z_{2,p}^{(PB)}$ $Z_{2,p}^{(max)}$ and $Z_{2,p}^{(min)}$ are affine invariant, i.e. invariant under nonsingular linear transformations $\bfm{AX}+\bfm{b}$,
\item[(ii)] The population canonical correlation $\lambda_1=\max_{\bfm{a},\bfm{b}}|\rho(\bfm{a\bar{X}},\bfm{bu})|=0$ if $\mu_{ijk}= 0$ for all $i,j,k$ and $>0$ if $\mu_{ijk}\neq 0$ for at least one combination of $i,j,k$, and
\item[(iii)] $Z_{2,p}^{(HL)}$, $Z_{2,p}^{(W)}$, $Z_{2,p}^{(PB)}$ $Z_{2,p}^{(max)}$ and $Z_{2,p}^{(min)}$ converge almost surely to the corresponding functions of the population canonical correlations $\lambda_1\geq\lambda_2\geq\ldots\geq\lambda_p$.
\end{itemize}
\end{theorem}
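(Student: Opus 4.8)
The plan is to handle the three parts in turn, using throughout that each of the five statistics is a symmetric function of the squared sample canonical correlations $\hat{\lambda}_1^2,\ldots,\hat{\lambda}_p^2$, the eigenvalues of $\bfm{S_{11}}^{-1}\bfm{S_{12}}\bfm{S_{22}}^{-1}\bfm{S_{21}}$. For part (i) it therefore suffices to show that these eigenvalues are affine invariant. Under $\bfm{X}_r\mapsto\bfm{AX}_r+\bfm{b}$ with $\bfm{A}$ nonsingular, the sample central moments $m_{i_1,\ldots,i_s}$ of (\ref{muest}) are unaffected by the shift $\bfm{b}$ and transform multilinearly under $\bfm{A}$. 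Reading off the block structure from (\ref{kovarians0})--(\ref{kovarianscov}), I would show that $\bfm{S_{11}}\mapsto\bfm{A}\bfm{S_{11}}\bfm{A'}$, $\bfm{S_{12}}\mapsto\bfm{A}\bfm{S_{12}}\bfm{D'}$ and $\bfm{S_{22}}\mapsto\bfm{D}\bfm{S_{22}}\bfm{D'}$, where $\bfm{D}$ is the matrix representing the congruence $\bfm{S}\mapsto\bfm{ASA'}$ acting on the distinct entries $\bfm{u}$ of $\bfm{S}$. Since this congruence is a bijection on symmetric matrices, $\bfm{D}$ is nonsingular; substituting, the factors $\bfm{A}$ and $\bfm{D}$ cancel in pairs and $\bfm{S_{11}}^{-1}\bfm{S_{12}}\bfm{S_{22}}^{-1}\bfm{S_{21}}$ is conjugated by $\bfm{A'}$, a similarity transformation that leaves the eigenvalues unchanged. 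Affine invariance of all five statistics follows.

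Part (ii) I would derive directly from the definition $\lambda_1=\max_{\bfm{a},\bfm{b}}|\rho(\bfm{a'\bar{X}},\bfm{b'u})|$. Because $\bfm{\Lambda_{11}}=\tfrac{1}{n}\bfm{\Sigma}$ is nonsingular, every nontrivial combination $\bfm{a'\bar{X}}$ has positive variance, so $\lambda_1=0$ holds exactly when $\Cov(\bfm{a'\bar{X}},\bfm{b'u})=0$ for all $\bfm{a},\bfm{b}$, that is, when $\bfm{\Lambda_{12}}=\bfm{0}$. By (\ref{kovarians1}) the entries of $\bfm{\Lambda_{12}}$ are $\tfrac{1}{n}\mu_{ijk}$, whence $\bfm{\Lambda_{12}}=\bfm{0}$ if and only if $\mu_{ijk}=0$ for all $i,j,k$, while a single nonzero $\mu_{ijk}$ forces $\lambda_1>0$. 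This gives the claimed equivalence.

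For part (iii) I would first cancel the common $1/n$ scaling: writing $\tilde{\bfm{S}}_{11}=n\bfm{S_{11}}$, $\tilde{\bfm{S}}_{12}=n\bfm{S_{12}}$ and $\tilde{\bfm{S}}_{22}=n\bfm{S_{22}}$, the matrix $\bfm{S_{11}}^{-1}\bfm{S_{12}}\bfm{S_{22}}^{-1}\bfm{S_{21}}=\tilde{\bfm{S}}_{11}^{-1}\tilde{\bfm{S}}_{12}\tilde{\bfm{S}}_{22}^{-1}\tilde{\bfm{S}}_{21}$ is unchanged, so the $\hat{\lambda}_i^2$ do not depend on $n$ through this scaling. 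The entries of the rescaled matrices are polynomials in the $m_{i_1,\ldots,i_s}$, which by the strong law of large numbers, together with $\bfm{\bar{X}}\to\bfm{\mu}$ a.s.\ and the continuous mapping theorem, converge almost surely to the population central moments $\mu_{i_1,\ldots,i_s}$. Hence $\tilde{\bfm{S}}_{11}\to(\mu_{ij})$, $\tilde{\bfm{S}}_{12}\to(\mu_{ijk})$ and $\tilde{\bfm{S}}_{22}\to(\mu_{ijkl}-\mu_{ij}\mu_{kl})$ almost surely, the $\tfrac{1}{n-1}$ terms in (\ref{kovarianscov}) vanishing in the limit. Provided these limits are nonsingular, continuity of matrix inversion and of the eigenvalue map yields $\hat{\lambda}_i^2\to\lambda_i^2$ a.s., and since each statistic is a continuous symmetric function of the eigenvalues, it converges almost surely to the corresponding function of the population canonical correlations.

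The step I expect to be the main obstacle is the bookkeeping in part (i): verifying the precise transformation rules for $\bfm{S_{12}}$ and $\bfm{S_{22}}$ and identifying the single matrix $\bfm{D}$ that governs both (most cleanly via the duplication and elimination matrices relating $\bfm{u}$ to $\mvec(\bfm{S})$), so that the cancellation leaving a pure similarity by $\bfm{A'}$ is unambiguous. In part (iii) the only delicate point is guaranteeing that the limiting $\tilde{\bfm{S}}_{22}$, the covariance matrix of the distinct second-order products, is nonsingular, so that the eigenvalues depend continuously on the limiting matrices.
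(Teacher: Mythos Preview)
Your proposal is correct and matches the paper's approach closely: for (i) the paper invokes the general invariance of canonical correlations under nonsingular linear transformations of each block (Theorem~10.2.4 of \citet{ma3}) and isolates the construction and nonsingularity of $\bfm{D}$ in an auxiliary lemma built from exactly the $\mvec$/$\mvech$ (duplication--elimination) machinery you anticipate, while (ii) is argued just as you do via $\bfm{\Lambda_{12}}=\bfm{0}\Leftrightarrow\mu_{ijk}=0$ for all $i,j,k$. Your treatment of (iii) is in fact more detailed than the paper's one-line appeal to continuity of the statistics in the sample moments together with the strong law, and your flagged caveat about nonsingularity of the limiting $\tilde{\bfm{S}}_{22}$ is a point the paper does not address.
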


Since the statistics are affine invariant, their distributions are the same for all $p$-variate normal distributions for a given sample size $n$. These null distributions are easily obtained using Monte Carlo simulation.

Since $\lambda_1\geq\lambda_j$ for $j>1$, $\lambda_1=0$ implies that all population canonical correlations are 0, as is the case for the normal distribution. The tests should therefore not be sensitive to distributions with that kind of symmetry. All five statistics are, by (ii) and (iii), however consistent against alternatives where $\mu_{ijk}\neq 0$ for at least one combination of $i,j,k$. In particular, they are sensitive to alternatives with skew marginal distributions.


\section{Tests based on $\bfm{\bar{X}}$ and $\bfm{v}$}\label{kurttester}
\subsection{Test statistics}
\citet{ma1,ma2} proposed the multivariate kurtosis measure
\[
\beta_{2,p}=\mean\Big{(}(\bfm{X}-\bfm{\mu})'\bfm{\Sigma}^{-1}(\bfm{Y}-\bfm{\mu})\Big{)}^2
\]
with sample counterpart
\begin{equation}\label{mardiab2p}
b_{2,p}=\frac{1}{n}\sum_{i=1}^n\Big{(}(\bfm{X_i}-\bfm{\bar{X}})'\bfm{S^{-1}}(\bfm{X_i}-\bfm{\bar{X}})\Big{)}^2.
\end{equation}

In the univariate setting
\begin{equation}\label{rho3grej}
\rho_3=\rho\Big{(}\bar{X},\frac{n}{(n-1)(n-2)}\sum_{i=1}^n(X_i-\bar{X})^3\Big{)}=\frac{\kappa}{\sqrt{\lambda+9\frac{n}{n-1}(\kappa+\gamma^2)+\frac{6n^2}{(n-1)(n-2)}}},
\end{equation}
where $\lambda=\frac{\mu_6}{\sigma^6}-15\kappa-10\gamma^2-15$ is the sixth standardized cumulant \citep{er1}. In a simulation power study he found the test for normality based on $\hat{\rho}_3=Z_3'$, the sample counterpart of (\ref{rho3grej}), to have a better overall performance than the popular $\hat{\kappa}=b_2=b_{2,1}$ test.

The ideas used in Section \ref{skewtester} for $\Cov((\bfm{\bar{X}},\bfm{u}))$ can also be used for $\Cov((\bfm{\bar{X}},\bfm{v}))$ in an analog manner, yielding multivariate generalizations of (\ref{rho3grej}). This leads to five new tests for normality, as described below.

Let $\bfm{P_{11}}$, $\bfm{P_{22}}$, $\bfm{P_{12}}$ and $\bfm{P_{21}}$ be the sample counterparts of $\bfm{\Psi_{11}}$, $\bfm{\Psi_{22}}$, $\bfm{\Psi_{12}}$ and $\bfm{\Psi_{21}}$, where the $\mu_{i_1,\ldots,i_s}$ are estimated by the sample moments, as in (\ref{muest}) above. Let $\hat{\psi}_1^2\geq\ldots\geq\hat{\psi}_p^2$ be the eigenvalues of $\bfm{P_{11}}^{-1}\bfm{P_{12}}\bfm{P_{22}}^{-1}\bfm{P_{21}}$.

The test statistics for the new tests are
\begin{equation}\label{rho3p}
Z_{3,p}^{(HL)}=tr(\bfm{P_{11}}^{-1}\bfm{P_{12}}\bfm{P_{22}}^{-1}\bfm{P_{21}})=\sum_{i=1}^p\hat{\psi}_i^2,
\end{equation}
\begin{equation}\label{r3p}
Z_{3,p}^{(W)}=\prod_{i=1}^p(1-\hat{\psi}_i^2),
\end{equation}
\begin{equation}\label{omega3p}
Z_{3,p}^{(PB)}=\sum_{i=1}^p\frac{\hat{\psi}_i^2}{1-\hat{\psi}_i^2},
\end{equation}
\begin{equation}
Z_{3,p}^{(max)}=\max(\hat{\psi}_1^2,\ldots,\hat{\psi}_p^2)=\hat{\psi}_1^2
\end{equation}
and
\begin{equation}
Z_{3,p}^{(min)}=\min(\hat{\psi}_1^2,\ldots,\hat{\psi}_p^2)=\hat{\psi}_p^2.
\end{equation}
Large values of $Z_{3,p}^{(HL)}$, $Z_{3,p}^{(PB)}$ $Z_{3,p}^{(max)}$ and $Z_{3,p}^{(min)}$ and small values of $Z_{3,p}^{(W)}$ imply non-normality. All five statistics are equivalent to $|Z_3'|$ from \citet{er1} for $p=1$.

\subsection{Theoretical results}
The following theorem mimics Theorem \ref{rho2thm} above. Its proof is given in the Appendix.

\begin{theorem}\label{rho3thm}
Assume that $\bfm{X_1}, \ldots, \bfm{X_n}$ are i.i.d. $p$-variate random variables fulfilling the conditions of Theorem \ref{covthm2}. Then, for $n\geq 2p+p(p-1)+p(p-1)(p-2)/6$ and $i,j,k,r,s,t=1,2,\ldots,p$
\begin{itemize}
\item[(i)] $Z_{3,p}^{(HL)}$, $Z_{3,p}^{(W)}$, $Z_{3,p}^{(PB)}$, $Z_{3,p}^{(max)}$ and $Z_{3,p}^{(min)}$ are affine invariant, i.e. invariant under nonsingular linear transformations $\bfm{AX}+\bfm{b}$,
\item[(ii)]The population canonical correlation $\psi_1=\max_{\bfm{a},\bfm{b}}|\rho(\bfm{a\bar{X}},\bfm{bv})|=0$ if $\mu_{irst}-\mu_{ir}\mu_{st}-\mu_{is}\mu_{rt}-\mu_{it}\mu_{rs}=0$ for all $i,r,s,t=1,\ldots,p$ and $>0$ otherwise, and
\item[(iii)]  $Z_{3,p}^{(HL)}$, $Z_{3,p}^{(W)}$, $Z_{3,p}^{(PB)}$ and $Z_{3,p}^{(max)}$ converge almost surely to the corresponding functions of the population canonical correlations $\psi_1\geq\psi_2\geq\ldots\geq\psi_p$.
\end{itemize}
\end{theorem}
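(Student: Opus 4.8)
The plan is to follow the three-part structure of the proof of Theorem~\ref{rho2thm}, replacing the pair $(\bfm{\bar{X}},\bfm{u})$ by $(\bfm{\bar{X}},\bfm{v})$ and using Theorem~\ref{covthm2} in place of Theorem~\ref{covthm}; throughout I would work with the squared sample canonical correlations $\hat{\psi}_1^2\geq\cdots\geq\hat{\psi}_p^2$, of which all five statistics are functions. For (i), I would first check how $\bfm{\bar{X}}$ and $\bfm{v}$ transform under $\bfm{X}\mapsto\bfm{AX}+\bfm{b}$ with $\bfm{A}$ nonsingular. The mean becomes $\bfm{A\bar{X}}+\bfm{b}$, while each centred quantity satisfies $\bfm{X}_r-\bfm{\bar{X}}\mapsto\bfm{A}(\bfm{X}_r-\bfm{\bar{X}})$, so the shift $\bfm{b}$ cancels and the third central sample moments transform multilinearly, $S_{ijk}\mapsto\sum_{a,b,c}A_{ia}A_{jb}A_{kc}S_{abc}$. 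This induces $\bfm{v}\mapsto\bfm{Bv}$, where $\bfm{B}$ represents the symmetric third tensor power of $\bfm{A}$ on the space of symmetric $3$-tensors; the step needing care is that $\bfm{B}$ is nonsingular, which holds because the symmetric power of an invertible map is invertible, with inverse the symmetric power of $\bfm{A}^{-1}$. A direct computation, using that the combinations appearing in Theorem~\ref{covthm2} transform tensorially, then gives $\bfm{P_{11}}\mapsto\bfm{A}\bfm{P_{11}}\bfm{A}'$, $\bfm{P_{12}}\mapsto\bfm{A}\bfm{P_{12}}\bfm{B}'$, $\bfm{P_{22}}\mapsto\bfm{B}\bfm{P_{22}}\bfm{B}'$ and $\bfm{P_{21}}\mapsto\bfm{B}\bfm{P_{21}}\bfm{A}'$, so that $\bfm{P_{11}}^{-1}\bfm{P_{12}}\bfm{P_{22}}^{-1}\bfm{P_{21}}$ is sent to $(\bfm{A}')^{-1}\bfm{P_{11}}^{-1}\bfm{P_{12}}\bfm{P_{22}}^{-1}\bfm{P_{21}}\bfm{A}'$, a similarity transformation. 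Its eigenvalues, and hence all five statistics, are unchanged.

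For (ii), I would use that $\psi_1^2$ is the largest eigenvalue of $\bfm{\Psi_{11}}^{-1}\bfm{\Psi_{12}}\bfm{\Psi_{22}}^{-1}\bfm{\Psi_{21}}$. Since $\bfm{\Psi_{11}}=\tfrac{1}{n}\bfm{\Sigma}$ and $\bfm{\Psi_{22}}$ are positive definite, $\psi_1=0$ holds iff every eigenvalue vanishes, iff $\bfm{\Psi_{12}}\bfm{\Psi_{22}}^{-1}\bfm{\Psi_{12}}'=\bfm{0}$, iff $\bfm{\Psi_{12}}=\bfm{0}$, the last equivalence using that $\bfm{\Psi_{22}}^{-1}$ is positive definite. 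By (\ref{psi12}) the element of $\bfm{\Psi_{12}}$ indexed by $i$ and $(r,s,t)$ equals $\tfrac{1}{n}(\mu_{irst}-\mu_{ir}\mu_{st}-\mu_{is}\mu_{rt}-\mu_{it}\mu_{rs})$, so $\bfm{\Psi_{12}}=\bfm{0}$ is precisely the stated condition; its negation gives $\psi_1>0$.

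For (iii), the tools are the strong law and the continuous mapping theorem. By the SLLN the sample central moments $m_{i_1,\ldots,i_s}$ converge almost surely to $\mu_{i_1,\ldots,i_s}$. The rescaled matrices $n\bfm{P}_{ab}$ are fixed continuous functions of these sample moments and of the convergent coefficients $n/(n-1)$ and $n^2/\big((n-1)(n-2)\big)$, so they converge almost surely to the finite limits of $n\bfm{\Psi}_{ab}$. Because the factors of $n$ cancel, $\bfm{P_{11}}^{-1}\bfm{P_{12}}\bfm{P_{22}}^{-1}\bfm{P_{21}}=(n\bfm{P_{11}})^{-1}(n\bfm{P_{12}})(n\bfm{P_{22}})^{-1}(n\bfm{P_{21}})$, and since inversion and the eigenvalue map are continuous at positive definite limits, the $\hat{\psi}_i^2$ converge almost surely to the limiting population values $\psi_i^2$. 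As $Z_{3,p}^{(HL)}$, $Z_{3,p}^{(W)}$, $Z_{3,p}^{(PB)}$ and $Z_{3,p}^{(max)}$ are continuous functions of $(\hat{\psi}_1^2,\ldots,\hat{\psi}_p^2)$ on the relevant domain (using $\psi_i^2<1$ for the Wilks and Pillai--Bartlett forms), the conclusion follows.

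I expect the main obstacle to be the nonsingularity bookkeeping rather than any single hard estimate: confirming that $\bfm{B}$ in (i) is invertible, and above all that $\bfm{\Psi_{22}}$ and its sample analogue $\bfm{P_{22}}$, with positive definite almost-sure limit, are nonsingular, so that the inverses in (ii) and (iii) exist and the eigenvalue and inversion maps are continuous there. Because the covariance expressions in Theorem~\ref{covthm2} are far more involved than those in Theorem~\ref{covthm}, verifying that the distinct third sample moments are non-degenerate is where the sample-size condition $n\geq 2p+p(p-1)+p(p-1)(p-2)/6$ must be invoked to guarantee $\bfm{P_{22}}$ has full rank.
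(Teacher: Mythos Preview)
Your proposal is correct and, for parts (ii) and (iii), matches the paper exactly: the paper simply declares these ``analog to the previous proof'' of Theorem~\ref{rho2thm}, and your SLLN/continuous-mapping argument and your $\bfm{\Psi_{12}}=\bfm{0}$ argument via (\ref{psi12}) are precisely that analogue.

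For part (i) there is a small but genuine difference in implementation. You work directly with $\bfm{v}$, identify the induced transformation as the symmetric third tensor power $\bfm{B}$ of $\bfm{A}$, argue that $\bfm{B}$ is invertible because symmetric powers of invertible maps are invertible, and then check that $\bfm{P_{11}}^{-1}\bfm{P_{12}}\bfm{P_{22}}^{-1}\bfm{P_{21}}$ undergoes a similarity. The paper deliberately \emph{avoids} constructing $\bfm{B}$: it passes from $\bfm{v_X}$ to the redundant vector $\bfm{w_X}=\frac{n^2}{(n-1)(n-2)}\mvec(\bar{m}_3)$, notes that the canonical correlations with $\bfm{\bar{X}}$ are unchanged by this passage (every linear combination of $\bfm{w_X}$ is a linear combination of $\bfm{v_X}$ and conversely), and then uses that $\mvec(\bar{m}_3(\bfm{AZ}))=(\bfm{A}\otimes\bfm{A}\otimes\bfm{A})\mvec(\bar{m}_3(\bfm{Z}))$ with $\det(\bfm{A}\otimes\bfm{A}\otimes\bfm{A})=\det(\bfm{A})^{3p^2}\neq 0$, together with the standard invariance of canonical correlations under nonsingular linear transformations of each block. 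Your route is conceptually cleaner but requires the tensor-power invertibility fact and a check that the $\bfm{P}_{ab}$ transform as claimed; the paper's route trades that for a $\mvec$/Kronecker computation that stays entirely within elementary matrix algebra and sidesteps any explicit description of the map on the vech-type space.
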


Using the affine invariance, the null distributions of the statistics can be obtained through Monte Carlo simulation. 

By (ii) and (iii) all five statistics are consistent against alternatives where $\mu_{irst}-\mu_{ir}\mu_{st}-\mu_{is}\mu_{rt}-\mu_{it}\mu_{rs}\neq 0$ for at least one combination of $i,r, s, t$.

\section{Simulation results}\label{skevsim}
\subsection{The simulation study}
To evaluate the performance of the new $Z_{2,p}$ and $Z_{3,p}$ tests, a Monte Carlo study of their power was carried out. The tests were compared to the test based on Mardia's skewness measure $b_{1,p}$ (\ref{b1p}), the test based on Mardia's kurtosis measure $b_{2,p}$ (\ref{mardiab2p}) and the Mardia-Kent omnibus test $T$ \citep{mk1}, in which the skewness and kurtosis measures are combined. The tests were compared for $n=20$ and $n=50$ for $p=2$ and $p=3$.

In many power studies for multivariate tests for normality alternatives with independent marginal distributions have been used. We believe that this can be misleading, as distributions with independent marginals are uncommon in practice and indeed of little interest in the multivariate setting, where the dependence structure of the marginals often is paramount. For this reason, we decided to focus mainly on alternatives with a more complex dependence structure in our study. One alternative with independent exponential marginals, that have been used in many previous power studies, is included for reference.

The alternatives used in the study are presented in Tables \ref{tabalt2}-\ref{tabalt3}. The asymmetric multivariate Laplace distribution mentioned in Table \ref{tabalt3} is described in \citet{kotz1}.

In order to see which alternatives that the different tests could be sensitive to, the population values of the statistics were determined for all alternatives. For most distributions the values were computed numerically, to one decimal place for Mardia's statistics and to two decimal places for the $Z_{2,p}$ and $Z_{3,p}$ tests. The population values are given in Table \ref{altpop} in the Appendix.


\begin{table}[ht]
\begin{center}
%
\caption{Alternatives constructed using their marginals.}\label{tabalt2}

\begin{tabular}{p{4.5cm} | p{8cm}  }
\textbf{Distr. of $Y_i$} & \textbf{Construction of $(Y_1,\ldots,Y_p)$}\\\hline
Indep. $Exp(1)$ & $Y_1,\ldots,Y_p$ i.i.d. $Exp(1)$.\\
$LogN(0,2)$ & $X_0\sim LogN(0,1)$ indep. of $X_1,\ldots,X_p$ i.i.d. $LogN(0,1)$. $Y_i=X_0\cdot X_p$.\\
$LogN(0,1)$ & $X_0\sim LogN(0,0.5)$ indep. of $X_1,\ldots,X_p$ i.i.d. $LogN(0,0.5)$. $Y_i=X_0\cdot X_p$.\\
$LogN(0,0.5)$ & $X_0\sim LogN(0,0.25)$ indep. of $X_1,\ldots,X_p$ i.i.d. $LogN(0,0.25)$. $Y_i=X_0\cdot X_p$.\\
$Laplace(0,1)$ (type I) & $X_0\sim Exp(1)$ indep. of $X_1,\ldots,X_p$ i.i.d. $Exp(1)$. $Y_i=X_i-X_0$.\\
$Laplace(0,1)$ (type II) & $X_0\sim N(0,1)$ indep. of $X_{i,1},X_{i,2},X_{i,3}$, $i=1,\ldots,p$ i.i.d. $N(0,1)$. $Y_i=X_0\cdot X_{i,1}+X_{i,2}\cdot X_{i,3}$.\\
$Beta(1,1)$ & $X_0\sim \Gamma(1,1)$ indep. of $X_1,\ldots,X_p$ i.i.d. $\Gamma(1,1)$. $Y_i=X_i/(X_i+X_0)$.\\
$Beta(1,2)$ & $X_0\sim \Gamma(2,1)$ indep. of $X_1,\ldots,X_p$ i.i.d. $\Gamma(1,1)$. $Y_i=X_i/(X_i+X_0)$.\\
$Beta(2,2)$ & $X_0\sim \Gamma(2,1)$ indep. of $X_1,\ldots,X_p$ i.i.d. $\Gamma(2,1)$. $Y_i=X_i/(X_i+X_0)$.\\
$\chi_2^2$ & $X_0\sim \Gamma(0.5,0.5)$ indep. of $X_1,\ldots,X_p$ i.i.d. $\Gamma(0.5,0.5)$. $Y_i=X_0+X_i$.\\
$\chi_8^2$ & $X_0\sim \Gamma(2,0.5)$ indep. of $X_1,\ldots,X_p$ i.i.d. $\Gamma(2,0.5)$. $Y_i=X_0+X_i$.\\\hline
\end{tabular}
\end{center}
\end{table}

Using R, the thirteen tests were applied to 1,000,000 samples from each alternative and each combination of $n$ and $p$. The null distributions for all test statistics were estimated using 100,000 normal samples.


\begin{table}[ht]
\begin{center}
\caption{Purely multivariate alternatives. Here $\bfm{\Sigma_r}$ is a covariance matrix with unit variances and correlations $r$.}\label{tabalt3}

\begin{tabular}{l | l  }
\textbf{Distribution} & \textbf{Description}\\\hline
$t(2)$ & Multivariate $t$ distribution, symmetric\\
$AL(\mathbf{0},\bfm{\Sigma_{0}})$& Symmetric multivariate Laplace\\
$AL(\mathbf{1},\bfm{\Sigma_{0}})$& Asymmetric multivariate Laplace\\
$AL(\mathbf{3},\bfm{\Sigma_{0}})$& Asymmetric multivariate Laplace\\
$AL(\mathbf{1},\bfm{\Sigma_{0.5}})$& Asymmetric multivariate Laplace\\
$AL(\mathbf{1},\bfm{\Sigma_{0.9}})$& Asymmetric multivariate Laplace\\
$\frac{9}{10}N(\bfm{0},\bfm{\Sigma_{0}})+\frac{1}{10}N(\bfm{1},\bfm{\Sigma_{0}})$ & Location polluted normal mixture\\
$\frac{9}{10}N(\bfm{0},\bfm{\Sigma_{0}})+\frac{1}{10}N(\bfm{2},\bfm{\Sigma_{0}})$ & Location polluted normal mixture\\
$\frac{9}{10}N(\bfm{0},\bfm{\Sigma_{0}})+\frac{1}{10}N(\bfm{0},\bfm{\Sigma_{0.5}})$ & Rotation polluted normal mixture\\
$\frac{9}{10}N(\bfm{0},\bfm{\Sigma_{0}})+\frac{1}{10}N(\bfm{1},\bfm{\Sigma_{0.5}})$ & Scale and rotation polluted normal mixture\\
$\frac{9}{10}N(\bfm{0},\bfm{\Sigma_{0}})+\frac{1}{10}N(\bfm{2},\bfm{\Sigma_{0.5}})$ & Scale and rotation polluted normal mixture\\
$\frac{3}{4}N(\bfm{0},\bfm{\Sigma_{0}})+\frac{1}{4}N(\bfm{1},\bfm{\Sigma_{0}})$ & Heavily location polluted normal mixture\\
$\frac{3}{4}N(\bfm{0},\bfm{\Sigma_{0}})+\frac{1}{4}N(\bfm{2},\bfm{\Sigma_{0}})$ & Heavily location polluted normal mixture\\
$\frac{3}{4}N(\bfm{0},\bfm{\Sigma_{0}})+\frac{1}{4}N(\bfm{0},\bfm{\Sigma_{0.5}})$ & Heavily rotation polluted normal mixture\\
$\frac{3}{4}N(\bfm{0},\bfm{\Sigma_{0}})+\frac{1}{4}N(\bfm{1},\bfm{\Sigma_{0.5}})$ & Heavily scale and rotation polluted normal mixture\\
$\frac{3}{4}N(\bfm{0},\bfm{\Sigma_{0}})+\frac{1}{4}N(\bfm{2},\bfm{\Sigma_{0.5}})$ & Heavily scale and rotation polluted normal mixture\\\hline
\end{tabular}
\end{center}
\end{table}

\subsection{Results and recommendations}
The results from the simulation study are presented in Tables \ref{tab2}-\ref{tab4} in the Appendix.

Some of the results in the tables highlight the fact that what holds true for one combination of $p$ and $n$ can be false for a different combination. For instance, when $p=2$, $Z_{2,p}^{(max)}$ had higher power than $b_{1,p}$ for the $AL(\mathbf{1},\bfm{\Sigma_{0}})$ and the multivariate $\chi^2_8$ alternatives, but when $p=3$, $Z_{2,p}^{(max)}$ had lower power than $b_{1,p}$. This phenomenon merits further investigation, as it implies that power studies performed for low values of $p$ can be misleading when choosing between tests to use for higher-dimensional data. Similarly, when $p=2$, $b_{1,p}$ has higher power than $Z_{2,p}^{(max)}$ against the $LogN(0,1)$ distribution when $n=20$ but lower power when $n=50$.

For $p=2$, the $Z_{2,p}^{(max)}$ had the best overall performance against asymmetric alternatives, while Mardias skewness test $b_{1,p}$ and the $Z_{2,p}^{(W)}$ and $Z_{2,p}^{(PB)}$ tests also displayed a good average performance. For $p=3$ the performance of $Z_{2,p}^{(max)}$ was somewhat worse, whereas $b_{1,p}$, $Z_{2,p}^{(W)}$ and $Z_{2,p}^{(PB)}$ still showed good performance.

Looking at the normal mixtures, which can be viewed as contamined normal distributions, we see that $Z_{2,p}^{(max)}$ and $b_{1,p}$ were on a par for the mildy polluted mixtures (with a 9:1 mixing ratio) and that $Z_{2,p}^{(max)}$ in general had higher power for the heavily polluted mixtures (with a 3:1 mixing ratio). This suggests the use of the $Z_{2,p}^{(max)}$ statistic for a test for outliers, an idea that perhaps could be investigated further.

Among the symmetric alternatives, Mardias kurtosis test $b_{2,p}$ and the $Z_{3,p}$ tests were somewhat surprisingly outperformed by the skewness test and the $Z_{2,p}$ tests for some of the alternatives. The $Z_{3,p}$ were seen to be remarkably insensitive to some alternatives, both symmetric and asymmetric, and offered lower power than the $Z_{2,p}$ against the (symmetric) multivariate t-distribution. On the other hand, the $Z_{3,p}$ tests had significantly higher power than the other tests against the symmetric distributions with dependent short-tailed $Beta(1,1)$ (uniform) and $Beta(2,2)$ marginals.

Based on the simulation results, our recommendations are that the $Z_{2,p}^{(max)}$ test should be used against asymmetric alternatives when $p=2$. For higher $p$, $b_{1,p}$, $Z_{2,p}^{(W)}$ or $Z_{2,p}^{(PB)}$ should be used instead. In general, Mardia's $b_{2,p}$ test should be used against symmetric alternatives. For short-tailed symmetric alternatives, one of the $Z_{3,p}$ tests would be a better choice.

\subsubsection*{Acknowledgments}
The author wishes to thank Silvelyn Zwanzig for several helpful suggestions.

\section*{Appendix: proofs and tables}\label{appendix}
For the proof of Theorems \ref{rho2thm} and Theorems \ref{rho3thm} we need some basic properties of the Kronecker product $\otimes$ and $\mvech$ and $\mvec$ operators from \citet{hen1}. The basics of the Kronecker product and the $\mvec$ operator were described in Section \ref{multmom}.

We will use that
\[
(\bfm{A}\otimes\bfm{B})(\bfm{C}\otimes\bfm{D})=\bfm{AC}\otimes\bfm{BD},\qquad (\bfm{A}\otimes\bfm{B})'=\bfm{A'}\otimes\bfm{B'}
\]
and that if $\bfm{A}$ is a $p\times p$ matrix and $\bfm{B}$ a $q\times q$ matrix,
\[
\det(\bfm{A}\otimes\bfm{B})=\det(\bfm{A})^q\det(\bfm{A})^p.
\]
The $\mvech$ operator works as the $\mvec$ operator, except that it only contains each distinct element of the matrix once. For a symmetric matrix $\bfm{A}$, $\mvech(\bfm{A})$ thus contains only the diagonal and the elements above the diagonal, whereas $\mvec(\bfm{A})$ contains the diagonal elements and the off-diagonal elements twice.

We have the following relationship between the $\mvec$ operator and the Kronecker product:
\[
\mvec(\bfm{ABC})=(\bfm{C'}\otimes\bfm{A})\mvec(\bfm{B}).
\]
Furthermore, for a given symmetric $p\times p$ matrix $\bfm{A}$ there exists a $p(p+1)/2\times p^2$ matrix $\bfm{H}$ and a $p^2\times p(p+1)/2$ matrix $\bfm{G}$ such that
\[
\mvech(\bfm{A})=\bfm{H}\mvec(\bfm{A})\qquad\mbox{ and }\qquad\mvec(\bfm{A})=\bfm{G}\mvech(\bfm{A}).
\]

As a preparation for the proof of Theorem \ref{rho2thm}, we prove the following auxiliary lemma.

\begin{lemma}\label{auxlemma}
Assume that $\bfm{X},\bfm{X_1}, \ldots, \bfm{X_n}$ are i.i.d. $p$-variate random variables fulfilling the conditions of Theorem \ref{covthm}. Let $S_{ij}=(n-1)^{-1}\sum_{k=1}^n(X_{k,i}-\bar{X}_i)(X_{k,j}-\bar{X}_j)$ be the elements of the sample covariance matrix $\bfm{S}$.
\[
\bfm{u_X}=(S_{11},S_{12},\ldots,S_{1p},S_{22},S_{23},\ldots,S_{2p},S_{33},\ldots,S_{p-1,p},S_{pp})'
\]
is a vector with $q=p(p+1)/2$ distinct elements. Denote its covariance matrix $\Cov(\bfm{u_X})=\bfm{\Lambda_{22}}$.

Let $\bfm{A}$ be a nonsingular $p\times p$ matrix and let $\bfm{b}$ be a $p$-dimensional vector. Then there exists a nonsingular $q\times q$ matrix $\bfm{D}$ such that
\begin{itemize}
\item[(i)] the sample variances and covariances of $\bfm{Y}=\bfm{AX}+\bfm{b}$ are given by $\bfm{u_Y}=\bfm{Du_X}$,
\item[(ii)] $\Cov(\bfm{u_Y})=\bfm{D\Lambda_{22}D'}$ and
\item[(iii)] $\det(\bfm{D})=\det(\bfm{A})^{p+1}$,
\end{itemize}
\end{lemma}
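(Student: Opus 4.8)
The plan is to push everything through the single identity $\bfm{S_Y}=\bfm{A}\bfm{S_X}\bfm{A}'$ relating the two sample covariance matrices. First I would observe that $\bfm{\bar{Y}}=\bfm{A}\bfm{\bar{X}}+\bfm{b}$, so $\bfm{Y_k}-\bfm{\bar{Y}}=\bfm{A}(\bfm{X_k}-\bfm{\bar{X}})$ and hence $\bfm{S_Y}=(n-1)^{-1}\sum_k(\bfm{Y_k}-\bfm{\bar{Y}})(\bfm{Y_k}-\bfm{\bar{Y}})'=\bfm{A}\bfm{S_X}\bfm{A}'$. Vectorising with $\mvec(\bfm{A}\bfm{S_X}\bfm{A}')=(\bfm{A}\otimes\bfm{A})\mvec(\bfm{S_X})$ and using the relations $\mvec=\bfm{G}\mvech$ and $\mvech=\bfm{H}\mvec$ from the appendix, together with the identifications $\bfm{u_X}=\mvech(\bfm{S_X})$ and $\bfm{u_Y}=\mvech(\bfm{S_Y})$, gives $\bfm{u_Y}=\bfm{H}(\bfm{A}\otimes\bfm{A})\bfm{G}\,\bfm{u_X}$. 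Setting $\bfm{D}=\bfm{H}(\bfm{A}\otimes\bfm{A})\bfm{G}$ establishes (i), and since this is a fixed (non-random) linear map, (ii) is then immediate from the transformation rule for covariance: $\Cov(\bfm{u_Y})=\Cov(\bfm{D}\bfm{u_X})=\bfm{D}\bfm{\Lambda_{22}}\bfm{D}'$.

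The only genuine work is (iii), and this is where I expect the main obstacle: because $\bfm{D}$ is assembled from the rectangular matrices $\bfm{H}$ and $\bfm{G}$, its determinant cannot be factored directly through $\det(\bfm{A}\otimes\bfm{A})=\det(\bfm{A})^{2p}$. The resolution is to recognise that $\bfm{D}=\bfm{D}(\bfm{A})$ is exactly the matrix, in $\mvech$ coordinates, of the linear map $\bfm{S}\mapsto\bfm{A}\bfm{S}\bfm{A}'$ on the $q$-dimensional space of symmetric matrices. Composing two such maps corresponds to $\bfm{S}\mapsto(\bfm{A}\bfm{B})\bfm{S}(\bfm{A}\bfm{B})'$, so $\bfm{D}(\bfm{A})\bfm{D}(\bfm{B})=\bfm{D}(\bfm{A}\bfm{B})$ with $\bfm{D}(\bfm{I})=\bfm{I}_q$; consequently $\bfm{A}\mapsto\det(\bfm{D}(\bfm{A}))$ is multiplicative.

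It then suffices to evaluate the determinant on a rich enough class of matrices. For diagonal $\bfm{A}=\mathrm{diag}(a_1,\ldots,a_p)$ the map $\bfm{S}\mapsto\bfm{A}\bfm{S}\bfm{A}'$ is itself diagonal in the $\mvech$ coordinates, sending $S_{ij}\mapsto a_ia_jS_{ij}$, so $\det(\bfm{D}(\bfm{A}))=\prod_{i\leq j}a_ia_j$. Counting the exponent of each $a_k$ in this product gives $(p-k+1)+k=p+1$, i.e. $\det(\bfm{D}(\bfm{A}))=(\prod_k a_k)^{p+1}=\det(\bfm{A})^{p+1}$. Using multiplicativity together with $\det(\bfm{D}(\bfm{P}))\det(\bfm{D}(\bfm{P}^{-1}))=1$, the same value $\det(\bfm{A})^{p+1}$ follows for every diagonalisable $\bfm{A}=\bfm{P}\bfm{\Lambda}\bfm{P}^{-1}$ over $\mathbb{C}$. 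Since both $\det(\bfm{D}(\bfm{A}))$ and $\det(\bfm{A})^{p+1}$ are polynomials in the entries of $\bfm{A}$ agreeing on the dense set of diagonalisable matrices, they agree identically, which gives (iii). Finally, nonsingularity of $\bfm{D}$ comes for free: $\det(\bfm{A})\neq 0$ forces $\det(\bfm{D})=\det(\bfm{A})^{p+1}\neq 0$, completing the lemma.
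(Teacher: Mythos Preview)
Your argument for (i) and (ii) coincides with the paper's: both identify $\bfm{u_X}=\mvech(\bfm{S_X})$, use $\bfm{S_Y}=\bfm{A}\bfm{S_X}\bfm{A}'$, vectorise via $\mvec(\bfm{A}\bfm{S_X}\bfm{A}')=(\bfm{A}\otimes\bfm{A})\mvec(\bfm{S_X})$ to obtain $\bfm{D}=\bfm{H}(\bfm{A}\otimes\bfm{A})\bfm{G}$, and then read off (ii) from linearity of covariance.

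For (iii) the approaches diverge. The paper does not argue at all: it simply cites Section~4.2 of \citet{hen1} for the identity $\det\big(\bfm{H}(\bfm{A}\otimes\bfm{A})\bfm{G}\big)=\det(\bfm{A})^{p+1}$ and concludes nonsingularity from $\det(\bfm{A})\neq 0$. Your route is genuinely different and self-contained: you interpret $\bfm{D}(\bfm{A})$ as the matrix, in $\mvech$ coordinates, of the congruence action $\bfm{S}\mapsto\bfm{A}\bfm{S}\bfm{A}'$ on the $q$-dimensional space of symmetric matrices, exploit the resulting multiplicativity $\bfm{D}(\bfm{A})\bfm{D}(\bfm{B})=\bfm{D}(\bfm{AB})$, evaluate the determinant explicitly on diagonal matrices (where each $a_k$ appears with exponent $(p-k+1)+k=p+1$), and extend to all $\bfm{A}$ by density of diagonalisable matrices combined with a polynomial-identity argument. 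What you gain is independence from the external reference and a transparent explanation of why the exponent is $p+1$; what the paper gains is a one-line proof. Both arguments are correct and lead to the same nonsingularity conclusion.
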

\begin{proof}
First we note that $\bfm{u_X}=\mvech(\bfm{S})$.

The transformed sample $\bfm{AX}+\bfm{b}$ has sample covariance matrix $\bfm{ASA'}$, so we wish to study $\mvech(\bfm{ASA'})$. We have
\[
\mvec(\bfm{ASA'})=(\bfm{A}\otimes \bfm{A})\mvec(\bfm{S}).
\]
Moreover, since $\bfm{S}$ is symmetric there exist nonsingular matrices $\bfm{G}$ and $\bfm{H}$ such that
\[
\mvec(\bfm{S})=\bfm{G}\mvech(\bfm{S})\qquad\mbox{ and }\qquad \mvech(\bfm{S})=\bfm{H}\mvec(\bfm{S}).
\]
Thus
\[
\bfm{u_Y}=\mvech(\bfm{ASA'})=\bfm{H}(\bfm{A}\otimes\bfm{A})\bfm{G}\mvech({\bfm{S}})=:\bfm{D}\bfm{u_X},
\]
which establishes the existence of $\bfm{D}$. From Section 4.2 of \citet{hen1} we have
\[
\det(\bfm{D})=\det(\bfm{H}(\bfm{A}\otimes\bfm{A})\bfm{G})=\det(\bfm{A})^{p+1}
\]
which in nonzero, since $\bfm{A}$ is nonsingular. $\bfm{D}$ is hence also nonsingular. In conclusion, we have established the existence and nonsingularity of $\bfm{D}$ as well as (i) and (iii). Finally, (ii) follows immediately from (i).
\end{proof}
%
We now have the tools necessary to tackle Theorem \ref{rho2thm}.

\begin{proof}[Proof of Theorem \ref{rho2thm}]
\begin{itemize}
\item[(i)] From Theorem 10.2.4 in \citet{ma3} we have that the canonical correlations between the random vectors $\bfm{Y}$ and $\bfm{Z}$ are invariant under the nonsingular linear transformations $\bfm{AY}+\bfm{b}$ and $\bfm{CZ}+\bfm{d}$. Clearly all five statistics are invariant under changes in location, since $\bfm{S_{11}}$, $\bfm{S_{22}}$, $\bfm{S_{12}}$ and $\bfm{S_{21}}$ all share that invariance property. It therefore suffices to show that the nonsingular linear transformation $\bfm{AX}$ induces nonsingular linear transformations $\bfm{C\bar{X}}$ and $\bfm{Du}$. $\bfm{C}=\bfm{A}$ is immediate and the existence of $\bfm{D}$ is given by Lemma \ref{auxlemma}.
\item[(ii)] By part (ii) of Theorem \ref{covthm}, $\mu_{ijk}=0$ for all $i,j,k$ implies that $\bfm{\Lambda}_{12}=\bfm{0}$. But then $\bfm{\Lambda_{11}}^{-1}\bfm{\Lambda_{12}}\bfm{\Lambda_{22}}^{-1}\bfm{\Lambda_{21}}=\bfm{0}$ and all canonical correlations are 0. If $\mu_{ijk}\neq 0$ then $\rho(\bar{X}_i,S_{jk})\neq 0$. Thus the linear combinations $\bfm{a'\bar{X}}=\bar{X}_i$ and $\bfm{b'u}=S_{jk}$ have nonzero correlation. $\lambda_1$ must therefore be greater than 0.
\item[(iii)] Follows from the fact that the statistics are continuous function of sample moments that converge almost surely.
\end{itemize}
\end{proof}

The proofs of parts (ii) and (iii) of Theorem \ref{rho3thm} are analog to the previous proof. The proof for part (i) is however slightly different as we omit to explicitly give a matrix that gives a nonsingular linear transformation of $\bfm{v_X}$.

\begin{proof}[Proof of Theorem \ref{rho3thm} (i)]
In the spirit of Section \ref{multmom}, let the third order central moment of a multivariate random variable $\bfm{Z}$ be
\[\begin{split}
\bar{m}_3(\bfm{Z})&=\mean\Big{\lbrack}(\bfm{Z}-\mean\bfm{Z})\otimes(\bfm{Z}-\mean\bfm{Z})'\otimes(\bfm{Z}-\mean\bfm{Z}) \Big{\rbrack}'\\
&=\mean\Big{\lbrack}(\bfm{Z}-\mean\bfm{Z})\Big{(}(\bfm{Z}-\mean\bfm{Z})\otimes(\bfm{Z}-\mean\bfm{Z})\Big{)}'\Big{\rbrack}.
\end{split}\]
Given a sample $\bfm{X}_1,\ldots,\bfm{X}_p$, let $S_{ijk}=\frac{n}{(n-1)(n-2)}\sum_{r=1}^n(X_{r,i}-\bar{X}_i)(X_{r,j}-\bar{X}_j)(X_{r,k}-\bar{X}_k)$. When the distribution of $\bfm{Z}$ is the empirical distribution of said sample,
\[
\bfm{v_X}=(S_{111},S_{112},\ldots,S_{pp(p-1)},S_{ppp})'=\frac{n^2}{(n-1)(n-2)}\mvech\Big{(}\bar{m}_3(\bfm{Z})\Big{)}.
\]
Similarly $\mvec\Big{(}\bar{m}_3(\bfm{Z})\Big{)}$ stacks the elements of $\bar{m}_3(\bfm{Z})$ in a vector that simply is $\mvech\Big{(}\bar{m}_3(\bfm{Z})\Big{)}$ with a few repetitions:
\[
\bfm{w_X}=(S_{111},S_{112},\ldots,S_{112}\ldots,S_{pp(p-1)},S_{ppp})'=\frac{n^2}{(n-1)(n-2)}\mvec\Big{(}\bar{m}_3(\bfm{Z})\Big{)}.
\]
Thus, for each linear combination $\bfm{a'}\bfm{w_X}$ there exists a $\bfm{b}$ so that $\bfm{b'}\bfm{v_X}=\bfm{a'}\bfm{w_X}$ and therefore, by the definition of canonical correlations, the (sample) canonical correlations between $\bfm{\bar{X}}$ and $\bfm{v_X}$ are the same as those between $\bfm{\bar{X}}$ and $\bfm{w_X}$.

Writing $\bfm{Y}=\bfm{Z}-\mean\bfm{Z}$, we have $\bar{m}_3(\bfm{Z})=\mean\Big{(}\bfm{Y}(\bfm{Y}\otimes\bfm{Y})'\Big{)}$ and
\[\begin{split}
\bar{m}_3(\bfm{AZ})&=\mean\Big{(}\bfm{AY}(\bfm{AY}\otimes\bfm{AY})'\Big{)}=\mean\Big{(}\bfm{AY}(\bfm{Y}\otimes\bfm{Y})'(\bfm{A}\otimes\bfm{A})'\Big{)}\\
&=\bfm{A}\bar{m}_3(\bfm{Z})(\bfm{A}\otimes\bfm{A})'.
\end{split}\]
Hence
\[
\mvec\Big{(}\bar{m}_3(\bfm{AZ})\Big{)}=(\bfm{A}\otimes\bfm{A}\otimes\bfm{A})\mvec\Big{(}\bar{m}_3(\bfm{Z})\Big{)}.
\]
Now, $\det(\bfm{A}\otimes\bfm{A}\otimes\bfm{A})=\det(\bfm{A}\otimes\bfm{A})^p\det(\bfm{A})^{p^2}=\det(\bfm{A})^{3p^2}>0$, so $\bfm{E}:=(\bfm{A}\otimes\bfm{A}\otimes\bfm{A})$ is a nonsingular matrix such that $\bar{m}_3(\bfm{AZ})=\bfm{E}\bar{m}_3(\bfm{Z})$. Since canonical correlations are invariant under nonsingular linear transformations of the two sets of variables, this means that the canonical correlations between $\bfm{\bar{X}}$ and $\bfm{w_X}$ remain unchanged under the transformation $\bfm{AX}+\bfm{b}$. Thus the canonical correlations between $\bfm{\bar{X}}$ and $\bfm{v_Y}$ must also necessarily remain unchanged. This proves the affine invariance of the statistics.
\end{proof}


\begin{sidewaystable}
\begin{center}
\caption{Population values of statistics (as $n\rightarrow\infty$) for distributions considered in the study.}\label{altpop}
\scriptsize{
\begin{tabular}{|l l | c c | c c c c c | c c c c c |  }
\hline
\textbf{Distribution}& & $\mathbf{b_{1,p}}$& $\mathbf{b_{2,p}}$ & $\mathbf{Z_{2,p}^{(HL)}}$ & $\mathbf{Z_{2,p}^{(W)}}$ & $\mathbf{Z_{2,p}^{(PB)}}$ & $\mathbf{Z_{2,p}^{(max)}}$ & $\mathbf{Z_{2,p}^{(min)}}$ & $\mathbf{Z_{3,p}^{(HL)}}$ & $\mathbf{Z_{3,p}^{(W)}}$ & $\mathbf{Z_{3,p}^{(PB)}}$ & $\mathbf{Z_{3,p}^{(max)}}$ & $\mathbf{Z_{3,p}^{(min)}}$\\\hline
Normal distribution  & $p=2$       & 0 & 8 & 0 & 1 & 0 & 0 & 0 & 0 & 1 & 0 & 0 & 0\\ 
& $p=3$       & 0 & 15         & 0 & 1 & 0 & 0 & 0 & 0 & 1 & 0 & 0 & 0\\ \hline
Indep. $Exp(1)$  & $p=2$       & 8.0 & 20.0          & 1.00 & 0.25 & 2.00 & 0.50 & 0.50 & 0.33 & 0.70 & 0.40 & 0.17 & 0.17\\ 
& $p=3$       & 12.0 & 32.9         & 1.50 & 0.13 & 3.00 & 0.50 & 0.50 & 0.50 & 0.58 & 0.60 & 0.17 & 0.17\\  \hline
$LogN(0,2)$    & $p=2$     & 79.2 & 177.0          & 0.73 & 0.40 & 1.16 & 0.37 & 0.37 & 0.13 & 0.88 & 0.14 & 0.07 & 0.06\\ 
& $p=3$       & 120.0 & 270.7         & 1.05 & 0.28 & 1.61 & 0.36 & 0.34 & 0.17 & 0.84 & 0.18 & 0.07 & 0.05\\  \hline
$LogN(0,1)$    & $p=2$     & 2.8 & 13.6          & 0.60 & 0.49 & 0.87 & 0.32 & 0.28 & 0.17 & 0.83 & 0.19 & 0.09 & 0.08\\ 
& $p=3$       & 4.3 & 24.5         & 0.91 & 0.34 & 1.32 & 0.35 & 0.28 & 0.27 & 0.75 & 0.30 & 0.10 & 0.09\\  \hline
$LogN(0,0.5)$   & $p=2$      & 0.2 & 8.3          & 0.07 & 0.93 & 0.07 & 0.04 & 0.03 & 0.00 & 1.00 & 0.00 & 0.00 & 0.00\\ 
& $p=3$       & 0.2 & 15.5         & 0.11 & 0.90 & 0.11 & 0.05 & 0.03 & 0.01 & 0.99 & 0.01 & 0.00 & 0.00\\  \hline
$Laplace(0,1)$ (type I)  & $p=2$      & 2.7 & 16.0           & 0.67 & 0.44 & 1.00 & 0.33 & 0.33 & 0.30 & 0.72 & 0.36 & 0.15 & 0.15\\ 
& $p=3$       & 6.0 & 28.4          & 1.20 & 0.22 & 2.00 & 0.40 & 0.40 & 0.47 & 0.60 & 0.56 & 0.16 & 0.16\\ \hline
$Laplace(0,1)$ (type II) & $p=2$       & 0.0 & 15.0           & 0.00 & 1.00 & 0.00 & 0.00 & 0.00 & 0.30 & 0.72 & 0.36 & 0.15 & 0.15\\ 
& $p=3$       & 0.0 & 27.0         & 0.00 & 1.00 & 0.00 & 0.00 & 0.00 & 0.47 & 0.60 & 0.56 & 0.16 & 0.16\\ \hline
$Beta(1,1)$    & $p=2$     & 0.1 & 6.0          & 0.10 & 0.90 & 0.11 & 0.07 & 0.03 & 1.38 & 0.09 & 4.71 & 0.75 & 0.63\\ 
& $p=3$       & 0.3 & 12.4         & 0.23 & 0.78 & 0.25 & 0.09 & 0.05 & 2.06 & 0.03 & 7.16 & 0.79 & 0.63\\  \hline
$Beta(1,2)$  & $p=2$       & 0.6 & 7.1          & 0.41 & 0.62 & 0.57 & 0.32 & 0.10 & 0.40 & 0.64 & 0.51 & 0.21 & 0.19\\ 
& $p=3$       & 1.0 & 14.1         & 0.58 & 0.50 & 0.84 & 0.38 & 0.10 & 0.76 & 0.42 & 1.02 & 0.26 & 0.25\\  \hline
$Beta(2,2)$  & $p=2$       & 0.1 & 6.5          & 0.08 & 0.92 & 0.09 & 0.05 & 0.03 & 0.71 & 0.41 & 1.12 & 0.41 & 0.30\\ 
& $p=3$       & 0.3 & 13.1         & 0.18 & 0.83 & 0.20 & 0.07 & 0.05 & 1.07 & 0.26 & 1.74 & 0.47 & 0.30\\  \hline
$\chi_2^2$    & $p=2$     & 7.7 & 23.9          & 0.97 & 0.27 & 1.89 & 0.52 & 0.44 & 0.38 & 0.66 & 0.46 & 0.20 & 0.18\\ 
& $p=3$       & 13.5 & 41.9         & 1.54 & 0.12 & 3.15 & 0.53 & 0.50 & 0.61 & 0.50 & 0.77 & 0.21 & 0.20\\ \hline
$\chi_8^2$    & $p=2$     & 1.9 & 12.0          & 0.54 & 0.53 & 0.76 & 0.32 & 0.22 & 0.16 & 0.85 & 0.18 & 0.09 & 0.07\\ 
& $p=3$       & 3.4 & 21.7          & 0.91 & 0.34 & 1.30 & 0.34 & 0.29 & 0.28 & 0.75 & 0.31 & 0.10 & 0.09\\  \hline
$t(2)$      & $p=2$   & -- & --          & -- & -- & -- & -- & -- & -- & -- & -- & -- & --\\ 
& $p=3$       & -- & --         & -- & -- & -- & -- & -- & -- & -- & -- & -- & --\\ \hline
$AL(\mathbf{0},\bfm{\Sigma_{0}})$   & $p=2$      & 0.0 & 16.0          & 0.00 & 1.00 & 0.00 & 0.00 & 0.00 & 0.33 & 0.69 & 0.40 & 0.17 & 0.17\\ 
& $p=3$       & 0.0 & 29.9         & 0.00 & 1.00 & 0.00 & 0.00 & 0.00 & 0.56 & 0.54 & 0.68 & 0.19 & 0.18\\  \hline
$AL(\mathbf{1},\bfm{\Sigma_{0}})$   & $p=2$      & 5.6 & 20.0          & 0.76 & 0.37 & 1.36 & 0.51 & 0.25 & 0.34 & 0.69 & 0.41 & 0.19 & 0.15\\ 
& $p=3$       & 8.3 & 35.7         & 1.1 & 0.23 & 2.01 & 0.56 & 0.27 & 0.56 & 0.54 & 0.69 & 0.22 & 0.17\\  \hline
$AL(\mathbf{3},\bfm{\Sigma_{0}})$    & $p=2$     & 6.8 & 20.1          & 0.87 & 0.30 & 1.71 & 0.55 & 0.32 & 0.35 & 0.68 & 0.43 & 0.20 & 0.15\\ 
& $p=3$       & 9.8 & 36.7         & 1.25 & 0.18 & 2.43 & 0.60 & 0.33 & 0.57 & 0.53 & 0.70 & 0.23 & 0.17\\ \hline
$AL(\mathbf{1},\bfm{\Sigma_{0.5}})$   & $p=2$      & 5.1 & 19.5          & 0.76 & 0.37 & 1.37 & 0.51 & 0.25 & 0.34 & 0.68 & 0.42 & 0.19 & 0.15\\ 
& $p=3$       & 7.1 & 34.8         & 0.97 & 0.29 & 1.66 & 0.51 & 0.23 & 0.55 & 0.54 & 0.68 & 0.21 & 0.17\\  \hline
$AL(\mathbf{1},\bfm{\Sigma_{0.9}})$   & $p=2$      & 4.7 & 19.3          & 0.66 & 0.43 & 1.11 & 0.46 & 0.20 & 0.34 & 0.69 & 0.40 & 0.18 & 0.15\\ 
& $p=3$       & 6.3 & 34.3         & 0.89 & 0.33 & 1.45 & 0.48 & 0.21 & 0.55 & 0.54 & 0.68 & 0.21 & 0.17\\  \hline
$\frac{9}{10}N(\bfm{0},\bfm{\Sigma_{0}})+\frac{1}{10}N(\bfm{1},\bfm{\Sigma_{0}})$   & $p=2$      & 0.0 & 8.1          & 0.01 & 0.99 & 0.01 & 0.01 & 0.00 & 0.00 & 1.00 & 0.00 & 0.00 & 0.00\\ 
& $p=3$       & 0.1 & 15.2         & 0.03 & 0.97 & 0.03 & 0.03 & 0 & 0.01 & 0.99 & 0.01 & 0.01 & 0.00\\  \hline
$\frac{9}{10}N(\bfm{0},\bfm{\Sigma_{0}})+\frac{1}{10}N(\bfm{2},\bfm{\Sigma_{0}})$    & $p=2$     & 0.5 & 8.9          & 0.18 & 0.82 & 0.22 & 0.18 & 0.00 & 0.07 & 0.93 & 0.07 & 0.07 & 0.00\\ 
& $p=3$       & X & X         & 0.29 & 0.71 & 0.42 & 0.29 & 0 & 0.13 & 0.87 & 0.15 & 0.13 & 0.00\\  \hline
$\frac{9}{10}N(\bfm{0},\bfm{\Sigma_{0}})+\frac{1}{10}N(\bfm{0},\bfm{\Sigma_{0.5}})$  & $p=2$       & 0.0 & 8.1          & 0.00 & 1.00 & 0.00 & 0.00 & 0.00 & 0.00 & 1.00 & 0.00 & 0.00 & 0.00\\ 
& $p=3$       & X & X         & 0.00 & 1.00 & 0.00 & 0.00 & 0.00 & 0.01 & 0.99 & 0.01 & 0.01 & 0.00\\ \hline
$\frac{9}{10}N(\bfm{0},\bfm{\Sigma_{0}})+\frac{1}{10}N(\bfm{1},\bfm{\Sigma_{0.5}})$   & $p=2$      & 0.1 & 8.4          & 0.04 & 0.96 & 0.04 & 0.04 & 0.00 & 0.02 & 0.98 & 0.02 & 0.02 & 0.00\\ 
& $p=3$       & X & X         & 0.10 & 0.90 & 0.11 & 0.09 & 0.01 & 0.08 & 0.92 & 0.08 & 0.06 & 0.01\\  \hline
$\frac{9}{10}N(\bfm{0},\bfm{\Sigma_{0}})+\frac{1}{10}N(\bfm{2},\bfm{\Sigma_{0.5}})$  & $p=2$       & 0.8 & 9.1          & 0.23 & 0.78 & 0.29 & 0.21 & 0.01 & 0.13 & 0.87 & 0.14 & 0.11 & 0.02\\ 
& $p=3$       & X & X         & 0.37 & 0.64 & 0.55 & 0.34 & 0.02 & 0.25 & 0.76 & 0.30 & 0.21 & 0.02\\  \hline
$\frac{3}{4}N(\bfm{0},\bfm{\Sigma_{0}})+\frac{1}{4}N(\bfm{1},\bfm{\Sigma_{0}})$ & $p=2$        & 0.0 & 7.9          & 0.01 & 0.99 & 0.01 & 0.01 & 0.00 & 0.00 & 1.00 & 0.00 & 0.00 & 0.00\\ 
& $p=3$       & X & X         & 0.03 & 0.97 & 0.03 & 0.03 & 0.00 & 0.00 & 1.00 & 0.00 & 0.00 & 0.00\\  \hline
$\frac{3}{4}N(\bfm{0},\bfm{\Sigma_{0}})+\frac{1}{4}N(\bfm{2},\bfm{\Sigma_{0}})$  & $p=2$       & 0.3 & 7.8          & 0.16 & 0.84 & 0.20 & 0.16 & 0.00 & 0.01 & 0.99 & 0.01 & 0.01 & 0.00\\ 
& $p=3$       & X & X         & 0.26 & 0.74 & 0.36 & 0.26 & 0 & 0.03 & 0.97 & 0.03 & 0.03 & 0.00\\  \hline
$\frac{3}{4}N(\bfm{0},\bfm{\Sigma_{0}})+\frac{1}{4}N(\bfm{0},\bfm{\Sigma_{0.5}})$  & $p=2$       & 0.0 & 8.2          & 0.00 & 1.00 & 0.00 & 0.00 & 0.00 & 0.01 & 0.99 & 0.01 & 0.01 & 0.00\\ 
& $p=3$       & X & X         & 0.00 & 1.00 & 0.00 & 0.00 & 0.00 & 0.04 & 0.96 & 0.04 & 0.02 & 0.01\\  \hline
$\frac{3}{4}N(\bfm{0},\bfm{\Sigma_{0}})+\frac{1}{4}N(\bfm{1},\bfm{\Sigma_{0.5}})$  & $p=2$       & 0.2 & 8.2          & 0.08 & 0.92 & 0.08 & 0.06 & 0.02 & 0.02 & 0.98 & 0.02 & 0.01 & 0.01\\ 
& $p=3$       & X & X         & 0.19 & 0.82 & 0.22 & 0.15 & 0.02 & 0.08 & 0.92 & 0.09 & 0.05 & 0.02\\  \hline
$\frac{3}{4}N(\bfm{0},\bfm{\Sigma_{0}})+\frac{1}{4}N(\bfm{2},\bfm{\Sigma_{0.5}})$   & $p=2$      & 0.6 & 7.9          & 0.27 & 0.74 & 0.34 & 0.23 & 0.04 & 0.05 & 0.95 & 0.05 & 0.02 & 0.02\\
& $p=3$       & X & X         & 0.46 & 0.58 & 0.66 & 0.36 & 0.05 & 0.13 & 0.87 & 0.14 & 0.07 & 0.03\\  \hline
\end{tabular}
}
\end{center}
\end{sidewaystable}

\begin{sidewaystable}
\begin{center}
\caption{Power of tests for normality against some alternatives, $\alpha=0.05$, $p=2$.}\label{tab2}
\scriptsize{
\begin{tabular}{|l l| c c c| c c c c c | c c c c c| }
\hline
\bf{Distribution} & & $\mathbf{b_{1,p}}$& $\mathbf{b_{2,p}}$& $\bfm{T}$ & $\mathbf{Z_{2,p}^{(HL)}}$ & $\mathbf{Z_{2,p}^{(W)}}$ & $\mathbf{Z_{2,p}^{(PB)}}$ & $\mathbf{Z_{2,p}^{(max)}}$ & $\mathbf{Z_{2,p}^{(min)}}$ & $\mathbf{Z_{3,p}^{(HL)}}$ & $\mathbf{Z_{3,p}^{(W)}}$ & $\mathbf{Z_{3,p}^{(PB)}}$ & $\mathbf{Z_{3,p}^{(max)}}$ & $\mathbf{Z_{3,p}^{(min)}}$ \\ \hline
Indep. $Exp(1)$ & $n=20$         & 0.79 & 0.54 & 0.55 & \bf 0.84 &\bf 0.86 & \bf 0.86 & \bf 0.84 & 0.72 & 0.24 & 0.23 & 0.20 & 0.15 & 0.26\\ 
				 & $n=50$         &\bf  1.00 & 0.88 & 0.86 &\bf  1.00 &\bf  1.00 &\bf  1.00 &\bf  1.00 &\bf  1.00 & 0.41 & 0.39 & 0.37 & 0.30 & 0.40\\\hline 
$LogN(0,2)$ & $n=20$          &\bf 0.95 & 0.82 & 0.82 &\bf 0.97 &\bf 0.97 &\bf 0.97 &\bf 0.97 & 0.89 & 0.50 & 0.48 & 0.44 & 0.36 & 0.51\\ 
				 & $n=50$          &\bf 1.00 &\bf 0.99 &\bf 0.99 &\bf 1.00 &\bf 1.00 &\bf 1.00 &\bf 1.00 &\bf 1.00 & 0.85 & 0.84 & 0.83 & 0.78 & 0.82\\\hline
$LogN(0,1)$  & $n=20$          &\bf 0.38 & 0.28 & 0.3 & 0.29 & 0.33 & 0.34 & 0.35 & 0.22 & 0.11 & 0.10 & 0.09 & 0.07 & 0.12\\ 
				 & $n=50$          & 0.83 & 0.56 & 0.56 &\bf 0.86 &\bf 0.86 &\bf 0.86 &\bf 0.84 & 0.7 & 0.14 & 0.13 & 0.11 & 0.08 & 0.15\\\hline
$LogN(0,0.5)$  & $n=20$          &\bf 0.07 & 0.06 &\bf 0.07 & 0.06 & 0.06 & 0.06 &\bf 0.07 & 0.06 & 0.05 & 0.05 & 0.05 & 0.05 & 0.06\\ 
				 & $n=50$          &\bf 0.12 & 0.09 & 0.09 & 0.10 & 0.10 & 0.10 &\bf 0.11 & 0.08 & 0.05 & 0.05 & 0.05 & 0.05 & 0.05\\\hline
$Laplace(0,1)$ (type I)  & $n=20$          & 0.57 & 0.51 & 0.44 & 0.58 &\bf 0.60 &\bf 0.60 &\bf 0.59 & 0.50 & 0.06 & 0.05 & 0.04 & 0.03 & 0.07\\
				 & $n=50$          &\bf 0.92 & 0.86 & 0.75 &\bf 0.92 &\bf 0.92 &\bf 0.92 &\bf 0.91 & 0.83 & 0.09 & 0.08 & 0.07 & 0.05 & 0.12\\\hline
$Laplace(0,1)$ (type II)  & $n=20$          & 0.37 &\bf 0.48 & 0.41 & 0.25 & 0.27 & 0.29 & 0.30 & 0.19 & 0.08 & 0.07 & 0.05 & 0.03 & 0.09\\ 
				& $n=50$   & 0.57 &\bf 0.84 & 0.71 & 0.38 & 0.39 & 0.40 & 0.40 & 0.24 & 0.1 & 0.09 & 0.07 & 0.05 & 0.14\\\hline
$Beta(1,1)$   & $n=20$          & 0.02 & 0.00 & 0.01 & 0.05 & 0.06 & 0.06 & 0.06 & 0.04 &\bf 0.41 &\bf 0.43 &\bf 0.43 & 0.38 & 0.34\\ 
				 & $n=50$          & 0.01 & 0.00 & 0.00 & 0.12 & 0.12 & 0.12 & 0.12 & 0.09 &\bf 0.98 &\bf 0.98 &\bf 0.98 &\bf 0.96 & 0.93   \\\hline
$Beta(1,2)$   & $n=20$          & 0.08 & 0.03 & 0.05 & 0.11 & 0.13 & 0.15 & 0.17 & 0.08 &\bf 0.18 &\bf 0.19 &\bf 0.19 &\bf 0.17 & 0.16\\ 
				 & $n=50$          & 0.26 & 0.01 & 0.02 & 0.59 & 0.60 & 0.62 & 0.63 & 0.31 &\bf 0.40 &\bf 0.40 &\bf 0.39 & 0.35 & 0.34   \\\hline
$Beta(2,2)$   & $n=20$          & 0.02 & 0.01 & 0.01 & 0.04 & 0.04 & 0.04 & 0.04 & 0.04 &\bf 0.14 &\bf 0.15 &\bf 0.14 & 0.13 & 0.13\\ 
				 & $n=50$       &   0.02 & 0.00 & 0.00 & 0.08 & 0.08 & 0.08 & 0.07 & 0.09 &\bf 0.53 &\bf 0.52 & 0.50 & 0.43 & 0.46   \\\hline
$\chi_2^2$  & $n=20$          & 0.79 & 0.65 & 0.64 & 0.78 &\bf 0.83 &\bf 0.84 &\bf 0.83 & 0.61 & 0.24 & 0.22 & 0.20 & 0.15 & 0.25\\ 
				 & $n=50$          & \bf 1.00 & 0.95 & 0.92 &\bf  1.00 &\bf  1.00 &\bf  1.00 &\bf  1.00 & 0.98 & 0.36 & 0.35 & 0.33 & 0.27 & 0.34\\ \hline
$\chi_8^2$  & $n=20$          &\bf 0.33 & 0.25 & 0.27 & 0.23 & 0.27 & 0.28 & 0.29 & 0.17 & 0.10 & 0.09 & 0.08 & 0.07 & 0.11\\ 
				 & $n=50$          & 0.76 & 0.51 & 0.50 &\bf  0.79 &\bf  0.80 &\bf  0.80 &\bf  0.78 & 0.59 & 0.11 & 0.10 & 0.09 & 0.07 & 0.12\\ \hline
$t(2)$  & $n=20$          & 0.72 &\bf 0.79 & 0.74 & 0.57 & 0.62 & 0.63 & 0.64 & 0.41 & 0.32 & 0.31 & 0.28 & 0.24 & 0.31\\
				 & $n=50$          & 0.92 &\bf  0.99 &\bf  0.97 & 0.84 & 0.85 & 0.85 & 0.85 & 0.61 & 0.66 & 0.66 & 0.65 & 0.61 & 0.58\\ \hline
$AL(\mathbf{0},\bfm{\Sigma_{0}})$ & $n=20$         & 0.42 &\bf 0.54 & 0.45 & 0.27 & 0.31 & 0.32 & 0.33 & 0.21 & 0.08 & 0.07 & 0.05 & 0.03 & 0.1\\ 
				 & $n=50$         & 0.61 &\bf 0.89 & 0.76 & 0.42 & 0.43 & 0.44 & 0.44 & 0.27 & 0.10 & 0.08 & 0.07 & 0.05 & 0.12\\\hline
$AL(\mathbf{1},\bfm{\Sigma_{0}})$ & $n=20$        &\bf 0.67 & 0.57 & 0.55 & 0.57 & 0.65 &\bf 0.68 &\bf 0.69 & 0.37 & 0.16 & 0.15 & 0.13 & 0.09 & 0.18\\ 
				 & $n=50$          &\bf 0.98 & 0.90 & 0.86 &\bf 0.99 &\bf 0.99 &\bf 0.99 &\bf 0.99 & 0.79 & 0.21 & 0.19 & 0.18 & 0.14 & 0.21\\\hline
%
%
$AL(\mathbf{3},\bfm{\Sigma_{0}})$ & $n=20$         & 0.73 & 0.58 & 0.58 & 0.68 & 0.76 &\bf 0.78 &\bf 0.79 & 0.46 & 0.19 & 0.18 & 0.16 & 0.12 & 0.20\\ 
				 & $n=50$          &\bf 1.00 & 0.91 & 0.88 &\bf 1.00 &\bf 1.00 &\bf 1.00 &\bf 1.00 & 0.92 & 0.24 & 0.23 & 0.21 & 0.17 & 0.24\\\hline
$AL(\mathbf{1},\bfm{\Sigma_{0.5}})$  & $n=20$          &\bf 0.64 & 0.56 & 0.54 & 0.52 & 0.60 &\bf 0.63 &\bf 0.64 & 0.34 & 0.15 & 0.14 & 0.11 & 0.08 & 0.17\\ 
				 & $n=50$          &\bf 0.97 & 0.90 & 0.85 &\bf 0.97 &\bf 0.97 &\bf 0.97 &\bf 0.97 & 0.72 & 0.19 & 0.18 & 0.17 & 0.13 & 0.21\\\hline
$AL(\mathbf{1},\bfm{\Sigma_{0.9}})$   & $n=20$           & \bf 0.61 & 0.57 & 0.53 & 0.50 & \bf 0.60 & \bf 0.59 & \bf 0.61 & 0.33 & 0.14 & 0.13 & 0.11& 0.07 & 0.15\\ 
				 & $n=50$           & \bf 0.95 & 0.90 & 0.85 & \bf 0.95 & \bf 0.95 & \bf 0.96 & \bf 0.96 & 0.68 & 0.19 & 0.18 & 0.16 & 0.13 & 0.21\\  \hline
$\frac{9}{10}N(\bfm{0},\bfm{\Sigma_{0}})+\frac{1}{10}N(\bfm{1},\bfm{\Sigma_{0}})$  & $n=20$          & 0.06 & 0.06 & 0.06 & 0.05 & 0.05 & 0.05 & 0.06 & 0.05 & 0.05 & 0.05 & 0.04 & 0.05 & 0.05\\ 
				 & $n=50$          &\bf 0.07 &\bf 0.07 &\bf 0.07 & 0.06 & 0.06 & 0.06 &\bf 0.07 & 0.06 & 0.05 & 0.04 & 0.04 & 0.04 & 0.05   \\\hline
$\frac{9}{10}N(\bfm{0},\bfm{\Sigma_{0}})+\frac{1}{10}N(\bfm{2},\bfm{\Sigma_{0}})$   & $n=20$          & 0.13 & 0.12 & 0.12 & 0.1 & 0.11 & 0.12 & 0.13 & 0.09 & 0.05 & 0.05 & 0.04 & 0.04 & 0.05\\ 
				 & $n=50$          &\bf 0.33 & 0.19 & 0.17 & 0.27 & 0.29 & 0.30 &\bf 0.33 & 0.13 & 0.05 & 0.04 & 0.04 & 0.04 & 0.05   \\\hline
$\frac{9}{10}N(\bfm{0},\bfm{\Sigma_{0}})+\frac{1}{10}N(\bfm{0},\bfm{\Sigma_{0.5}})$   & $n=20$          & 0.05 & 0.05 &\bf 0.06 & 0.05 & 0.05 & 0.05 & 0.05 & 0.05 & 0.05 & 0.05 & 0.04 & 0.05 & 0.05\\ 
				 & $n=50$          &\bf  0.06 &\bf  0.06 &\bf  0.06 & 0.05 &\bf  0.06 &\bf  0.06 &\bf  0.06 & 0.05 & 0.05 & 0.04 & 0.04 & 0.04 & 0.05\\  \hline
$\frac{9}{10}N(\bfm{0},\bfm{\Sigma_{0}})+\frac{1}{10}N(\bfm{1},\bfm{\Sigma_{0.5}})$   & $n=20$          &\bf 0.07 &\bf 0.07 &\bf 0.07 & 0.06 & 0.06 &\bf 0.07 &\bf 0.07 & 0.06 & 0.05 & 0.05 & 0.04 & 0.05 & 0.05\\ 
				 & $n=50$          &\bf  0.11 &\bf  0.10 &\bf  0.10 & 0.09 &\bf  0.10 &\bf  0.10 &\bf  0.10 & 0.08 & 0.05 & 0.05 & 0.05 & 0.04 & 0.06\\  \hline
$\frac{9}{10}N(\bfm{0},\bfm{\Sigma_{0}})+\frac{1}{10}N(\bfm{2},\bfm{\Sigma_{0.5}})$  & $n=20$          &\bf 0.17 & 0.13 & 0.14 & 0.13 & 0.14 & 0.15 & 0.15 & 0.11 & 0.06 & 0.06 & 0.05 & 0.05 & 0.06\\
				 & $n=50$          &\bf  0.43 & 0.23 & 0.21 & 0.36 & 0.38 & 0.39 & 0.40 & 0.20 & 0.09 & 0.08 & 0.07 & 0.06 & 0.10\\ \hline
$\frac{3}{4}N(\bfm{0},\bfm{\Sigma_{0}})+\frac{1}{4}N(\bfm{1},\bfm{\Sigma_{0}})$  & $n=20$          & 0.05 & 0.05 & 0.05 & 0.05 & 0.05 & 0.05 & 0.05 & 0.05 & 0.05 & 0.05 & 0.05 & 0.05 & 0.05\\ 
				 & $n=50$          &\bf 0.06 & 0.05 & 0.05 &\bf 0.06 &\bf 0.06 &\bf 0.06 &\bf 0.06 & 0.05 & 0.05 & 0.05 & 0.05 & 0.05 & 0.05\\ \hline
$\frac{3}{4}N(\bfm{0},\bfm{\Sigma_{0}})+\frac{1}{4}N(\bfm{2},\bfm{\Sigma_{0}})$   & $n=20$          & 0.07 & 0.05 & 0.05 & 0.08 &0.08 &\bf 0.09 &\bf 0.10 & 0.07 & 0.05 & 0.05 & 0.05 & 0.05 & 0.05\\ 
				 & $n=50$          & 0.14 & 0.04 & 0.04 & 0.21 & 0.23 & 0.24 &\bf 0.27 & 0.12 & 0.06 & 0.06 & 0.06 & 0.06 & 0.06\\ \hline
$\frac{3}{4}N(\bfm{0},\bfm{\Sigma_{0}})+\frac{1}{4}N(\bfm{0},\bfm{\Sigma_{0.5}})$   & $n=20$          &\bf 0.06 &\bf 0.06 &\bf 0.06 & 0.05 & 0.05 &\bf 0.06 &\bf 0.06 & 0.05 & 0.05 & 0.05 & 0.04 & 0.05 & 0.05\\ 
				 & $n=50$          & 0.07 &\bf 0.08 & 0.07 & 0.06 & 0.06 & 0.06 & 0.06 & 0.06 & 0.05 & 0.04 & 0.04 & 0.04 & 0.05\\ \hline
$\frac{3}{4}N(\bfm{0},\bfm{\Sigma_{0}})+\frac{1}{4}N(\bfm{1},\bfm{\Sigma_{0.5}})$   & $n=20$          & 0.07 & 0.07 & 0.07 & 0.07 &\bf 0.08 &\bf 0.08 &\bf 0.08 & 0.07 & 0.05 & 0.05 & 0.04 & 0.05 & 0.05\\ 
				 & $n=50$          &\bf 0.13 & 0.08 & 0.08 &\bf 0.14 &\bf 0.14 &\bf 0.14 &\bf 0.13 & 0.12 & 0.05 & 0.05 & 0.05 & 0.04 & 0.06  \\ \hline
$\frac{3}{4}N(\bfm{0},\bfm{\Sigma_{0}})+\frac{1}{4}N(\bfm{2},\bfm{\Sigma_{0.5}})$  & $n=20$          & 0.11 & 0.06 & 0.07 &\bf 0.14 &\bf 0.15 &\bf 0.15 &\bf 0.15 &\bf 0.14 & 0.05 & 0.05 & 0.05 & 0.05 & 0.05\\
				 & $n=50$      & 0.32 & 0.06 & 0.06 &\bf 0.44 &\bf 0.45 &\bf 0.45 &\bf 0.44 & 0.31 & 0.07 & 0.07 & 0.06 & 0.06 & 0.08    \\ \hline
\end{tabular}
}
\end{center}
\end{sidewaystable}



\begin{sidewaystable}
\begin{center}
\caption{Power of tests for normality against some alternatives, $\alpha=0.05$, $p=3$.}\label{tab4}
\scriptsize{
\begin{tabular}{|l l| c c c| c c c c c | c c c c c| }
\hline
\bf{Distribution} & & $\mathbf{b_{1,p}}$& $\mathbf{b_{2,p}}$& $\bfm{T}$ & $\mathbf{Z_{2,p}^{(HL)}}$ & $\mathbf{Z_{2,p}^{(W)}}$ & $\mathbf{Z_{2,p}^{(PB)}}$ & $\mathbf{Z_{2,p}^{(max)}}$ & $\mathbf{Z_{2,p}^{(min)}}$ & $\mathbf{Z_{3,p}^{(HL)}}$ & $\mathbf{Z_{3,p}^{(W)}}$ & $\mathbf{Z_{3,p}^{(PB)}}$ & $\mathbf{Z_{3,p}^{(max)}}$ & $\mathbf{Z_{3,p}^{(min)}}$ \\ \hline
Indep. $Exp(1)$ & $n=20$          & 0.82 & 0.61 & 0.62 & \bf 0.84 & \bf 0.86 & \bf 0.84 & 0.77 & 0.68 & 0.29 & 0.29 & 0.26 & 0.19 & 0.26   \\
				 & $n=50$          & \bf 1.00 & 0.93 & 0.91 & \bf 1.00 & \bf 1.00 & \bf 1.00 & \bf 1.00 & \bf 1.00 & 0.59 & 0.59 & 0.57 & 0.46 & 0.49   \\\hline
$LogN(0,2)$ & $n=20$          & \bf 0.97 & 0.89 & 0.88 & \bf 0.98 & \bf 0.98 & \bf 0.98 & \bf 0.96 & 0.88 & 0.61 & 0.63 & 0.62 & 0.52 & 0.53   \\
				 & $n=50$          & \bf 1.00 & \bf 1.00 & \bf 1.00 & \bf 1.00 & \bf 1.00 & \bf 1.00 & \bf 1.00 & \bf 1.00 & 0.95 & 0.95 & 0.95 & 0.90 & 0.88   \\\hline
$LogN(0,1)$  & $n=20$          & \bf 0.41 & 0.33 & 0.36 & 0.28 & 0.30 & 0.30 & 0.29 & 0.21 & 0.13 & 0.12 & 0.11 & 0.08 & 0.13   \\
				 & $n=50$          & \bf 0.89 & 0.69 & 0.67 & \bf 0.87 & \bf 0.88 & \bf 0.88 & 0.84 & 0.63 & 0.22 & 0.21 & 0.19 & 0.14 & 0.21   \\\hline
$LogN(0,0.5)$  & $n=20$          & \bf 0.07 & \bf 0.07 & \bf 0.07 & 0.06 & 0.06 & 0.06 & 0.06 & 0.06 & 0.05 & 0.05 & 0.05 & 0.05 & 0.05   \\
				 & $n=50$          & \bf 0.11 & 0.09 & 0.09 & 0.09 & 0.09 & 0.09 & 0.09 & 0.08 & 0.06 & 0.06 & 0.06 & 0.06 & 0.06  \\\hline
$Laplace(0,1)$ (type I)  & $n=20$          & 0.71 & 0.61 & 0.53 & \bf 0.75 & \bf 0.76 &\bf  0.75 & 0.68 & 0.59 & 0.09 & 0.08 & 0.06 & 0.04 & 0.11  \\
				 & $n=50$          & \bf 0.99 & 0.94 & 0.86 & \bf 0.99 & \bf 1.00 & \bf 0.99 & \bf 0.99 & 0.94 & 0.24 & 0.23 & 0.20 & 0.12 & 0.24   \\\hline
$Laplace(0,1)$ (type II)  & $n=20$          & 0.46 & \bf 0.58 & 0.49 & 0.33 & 0.35 & 0.35 & 0.32 & 0.23 & 0.09 & 0.08 & 0.07 & 0.05 & 0.10   \\
				 & $n=50$          & 0.72 & \bf 0.93 & 0.83 & 0.51 & 0.53 & 0.54 & 0.52 & 0.30 & 0.17 & 0.16 & 0.14 & 0.08 & 0.16   \\\hline
$Beta(1,1)$   & $n=20$          & 0.02 & 0.01 & 0.02 & 0.06 & 0.06 & 0.06 & 0.06 & 0.05 & 0.28 & \bf 0.31 & \bf 0.33 & 0.28 & 0.20    \\
				 & $n=50$          & 0.02 & 0.00 & 0.00 & 0.15 & 0.16 & 0.16 & 0.16 & 0.11 & \bf 0.96 & \bf 0.97 & \bf 0.96 & 0.92 & 0.85  \\\hline
$Beta(1,2)$   & $n=20$          & 0.08 & 0.05 & 0.06 & 0.09 & 0.10 & 0.10 & 0.11 & 0.09 & \bf 0.16 & \bf 0.16 & \bf 0.17 & 0.14 & 0.13   \\
				 & $n=50$          & 0.21 & 0.02 & 0.03 & 0.45 & 0.49 & 0.52 & \bf 0.56 & 0.20 & 0.45 & 0.46 & 0.46 & 0.38 & 0.35   \\\hline
$Beta(2,2)$   & $n=20$          & 0.02 & 0.01 & 0.02 & 0.04 & 0.05 & 0.05 & 0.05 & 0.04 & \bf 0.10 & \bf 0.10 & \bf 0.11 & \bf 0.10 & 0.08   \\
				 & $n=50$         & 0.02 & 0.00 & 0.01 & 0.10 & 0.10 & 0.09 & 0.09 & 0.09 & \bf 0.40 & \bf 0.40 & \bf 0.39 & 0.32 & 0.32  \\\hline
$\chi_2^2$  & $n=20$          & \bf 0.89 & 0.79 & 0.76 & \bf 0.88 & \bf 0.90 & \bf 0.90 & 0.86 & 0.72 & 0.35 & 0.34 & 0.32 & 0.23 & 0.31   \\
				 & $n=50$          & \bf 1.00 & \bf 0.99 & 0.97 & \bf 1.00 & \bf 1.00 & \bf 1.00 & \bf 1.00 & \bf 0.99 & 0.71 & 0.71 & 0.69 & 0.56 & 0.60   \\\hline
$\chi_8^2$  & $n=20$          & \bf 0.39 & 0.31 & 0.32 & 0.31 & 0.33 & 0.32 & 0.29 & 0.23 & 0.10 & 0.09 & 0.08 & 0.06 & 0.10  \\
				 & $n=50$          & \bf 0.87 & 0.62 & 0.59 & \bf 0.87 & \bf 0.88 & \bf 0.87 & 0.82 & 0.68 & 0.18 & 0.18 & 0.16 & 0.12 & 0.17   \\\hline
$t(2)$  & $n=20$          & 0.83 & \bf 0.89 & 0.85 & 0.70 & 0.75 & 0.76 & 0.74 & 0.50 & 0.34 & 0.36 & 0.36 & 0.32 & 0.30   \\
				 & $n=50$          & \bf 0.98 & \bf 1.00 & \bf 0.99 & 0.95 & 0.95 & 0.95 & 0.95 & 0.72 & 0.77 & 0.77 & 0.75 & 0.69 & 0.69   \\\hline	
$AL(\mathbf{0},\bfm{\Sigma_{0}})$ & $n=20$ & 0.58  & \bf 0.71 & 0.60 & 0.43 & 0.46 & 0.46 & 0.44 & 0.30 & 0.10 & 0.08 & 0.07 & 0.04 & 0.11            \\
				 & $n=50$          & 0.80 & \bf 0.98 & 0.90 & 0.61 & 0.62 & 0.63 & 0.62 & 0.36 & 0.16 & 0.14 & 0.13 & 0.08 & 0.15  \\\hline
$AL(\mathbf{1},\bfm{\Sigma_{0}})$ & $n=20$          & \bf 0.75 & 0.71 & 0.68 & 0.58 & 0.65 & 0.67 & 0.66 & 0.38 & 0.24 & 0.22 & 0.21 & 0.15 & 0.22  \\
				 & $n=50$          & \bf 0.99 & \bf 0.98 & 0.95 & \bf 0.99 & \bf 0.99 & \bf 0.99 & \bf 0.99 & 0.73 & 0.35 & 0.34 & 0.32 & 0.23 & 0.29  \\\hline
$AL(\mathbf{3},\bfm{\Sigma_{0}})$  & $n=20$          & \bf 0.78 & 0.71 & 0.70 & 0.64 & 0.72 & 0.74 & 0.73 & 0.41 & 0.28 & 0.27 & 0.26 & 0.19 & 0.26   \\
				 & $n=50$          & \bf 1.00 & \bf 0.98 & 0.96 & \bf 1.00 & \bf 1.00 & \bf 1.00 & \bf 1.00 & 0.83 & 0.39 & 0.39 & 0.37 & 0.28 & 0.32   \\\hline
$AL(\mathbf{1},\bfm{\Sigma_{0.5}})$  & $n=20$          & \bf 0.72 & \bf 0.71 & 0.66 & 0.55 & 0.61 & 0.62 & 0.61 & 0.36 & 0.20 & 0.19 & 0.17 & 0.12 & 0.20   \\
				 & $n=50$          & \bf 0.98 & \bf 0.98 & 0.94 & \bf 0.97 & \bf 0.98 & \bf 0.98 & \bf 0.98 & 0.65 & 0.31 & 0.30 & 0.27 & 0.19 & 0.27  \\\hline
$AL(\mathbf{1},\bfm{\Sigma_{0.9}})$   & $n=20$          & \bf 0.70 & \bf 0.71 & 0.66 & 0.53 & 0.58 & 0.60 & 0.58 & 0.35 & 0.19 & 0.17 & 0.16 & 0.11 & 0.18   \\
				 & $n=50$          & \bf 0.98 & \bf 0.98 & 0.94 & 0.95 & \bf 0.96 & \bf 0.96 & \bf 0.96 & 0.61 & 0.29 & 0.28 & 0.25 & 0.17 & 0.25  \\\hline
$\frac{9}{10}N(\bfm{0},\bfm{\Sigma_{0}})+\frac{1}{10}N(\bfm{1},\bfm{\Sigma_{0}})$  & $n=20$          & \bf 0.06 & \bf  0.06 & \bf  0.06 & 0.05 & 0.05 & 0.05 & 0.05 & 0.05 & 0.05 & 0.05 & 0.05 & 0.05 & 0.05   \\
				 & $n=50$          & \bf 0.07 & \bf 0.07 & \bf 0.07 & 0.06 & \bf 0.07 & \bf 0.07 & \bf 0.07 & 0.06 & 0.05 & 0.05 & 0.05 & 0.05 & 0.05   \\\hline
$\frac{9}{10}N(\bfm{0},\bfm{\Sigma_{0}})+\frac{1}{10}N(\bfm{2},\bfm{\Sigma_{0}})$   & $n=20$          & \bf 0.15 & 0.13 & 0.13 & 0.11 & 0.12 & 0.12 & 0.12 & 0.09 & 0.06 & 0.06 & 0.06 & 0.05 & 0.06   \\
				 & $n=50$          & \bf 0.39 & 0.21 & 0.19 & 0.27 & 0.32 & 0.36 & \bf 0.42 & 0.12 & 0.07 & 0.07 & 0.07 & 0.06 & 0.07   \\\hline
$\frac{9}{10}N(\bfm{0},\bfm{\Sigma_{0}})+\frac{1}{10}N(\bfm{0},\bfm{\Sigma_{0.5}})$   & $n=20$          & 0.05 & \bf 0.06 &  \bf 0.06 & 0.05 & 0.05 & 0.05 & 0.05 & 0.05 & 0.05 & 0.05 & 0.05 & 0.05 & 0.05   \\
				 & $n=50$          & 0.06 & \bf 0.07 & \bf 0.07 & 0.06 & 0.06 & 0.06 & 0.06 & 0.06 & 0.05 & 0.05 & 0.05 & 0.05 & 0.05   \\\hline
$\frac{9}{10}N(\bfm{0},\bfm{\Sigma_{0}})+\frac{1}{10}N(\bfm{1},\bfm{\Sigma_{0.5}})$   & $n=20$          & 0\bf .16 & 0.11 & 0.11 & 0.14 & 0.14 & 0.14 & 0.14 & 0.12 & 0.05 & 0.05 & 0.05 & 0.05 & 0.06   \\
				 & $n=50$          & \bf 0.15 & 0.12 & 0.12 & 0.11 & 0.12 & 0.12 & 0.12 & 0.09 & 0.07 & 0.07 & 0.06 & 0.05 & 0.07   \\\hline
$\frac{9}{10}N(\bfm{0},\bfm{\Sigma_{0}})+\frac{1}{10}N(\bfm{2},\bfm{\Sigma_{0.5}})$  & $n=20$          & \bf 0.74 & 0.32 & 0.27 & 0.68 & 0.69 & 0.70 & \bf 0.71 & 0.30 & 0.14 & 0.13 & 0.12 & 0.09 & 0.18   \\
				 & $n=50$          & \bf 0.54 & 0.25 & 0.25 & 0.40 & 0.44 & 0.48 & \bf 0.51 & 0.16 & 0.14 & 0.14 & 0.13 & 0.09 & 0.13   \\\hline
$\frac{3}{4}N(\bfm{0},\bfm{\Sigma_{0}})+\frac{1}{4}N(\bfm{1},\bfm{\Sigma_{0}})$  & $n=20$          & 0.05 & 0.05 & 0.05 & 0.05 & 0.05 & 0.05 & 0.05 & 0.05 & 0.05 & 0.05 & 0.05 & 0.05 & 0.05   \\
				 & $n=50$          & 0.05 & 0.05 & 0.05 & \bf 0.06 & \bf 0.06 & \bf 0.06 & \bf 0.06 & \bf 0.06 & 0.05 & 0.05 & 0.05 & 0.05 & 0.05   \\\hline
$\frac{3}{4}N(\bfm{0},\bfm{\Sigma_{0}})+\frac{1}{4}N(\bfm{2},\bfm{\Sigma_{0}})$   & $n=20$          & \bf 0.07 & 0.05 & 0.05 & \bf 0.09 & \bf 0.09 & \bf 0.09 & \bf 0.09 & \bf 0.08 & 0.06 & 0.06 & 0.06 & 0.06 & 0.05   \\
				 & $n=50$          & 0.12 & 0.04 & 0.04 & 0.21 & 0.24 & 0.27 &\bf 0.32 & 0.11 & 0.06 & 0.06 & 0.06 & 0.06 & 0.06   \\\hline
$\frac{3}{4}N(\bfm{0},\bfm{\Sigma_{0}})+\frac{1}{4}N(\bfm{0},\bfm{\Sigma_{0.5}})$   & $n=20$          & 0.06 & 0.07 & 0.07 & 0.06 & 0.06 & 0.06 & 0.06 & 0.06 & 0.05 & 0.05 & 0.05 & 0.05 & 0.05   \\
				 & $n=50$          & \bf 0.08 & \bf 0.09 & \bf 0.09 & 0.06 & 0.07 & 0.07 & 0.06 & 0.06 & 0.05 & 0.05 & 0.05 & 0.05 & 0.06  \\\hline
$\frac{3}{4}N(\bfm{0},\bfm{\Sigma_{0}})+\frac{1}{4}N(\bfm{1},\bfm{\Sigma_{0.5}})$   & $n=20$          & \bf 0.08 & 0.07 & 0.07 & \bf 0.09 & \bf 0.08 & \bf 0.08 & \bf 0.08 & \bf 0.08 & 0.05 & 0.05 & 0.05 & 0.05 & 0.05   \\
				 & $n=50$          & 0.18 & 0.10 & 0.09 & \bf 0.19 & \bf 0.19 & \bf 0.20 & 0.17 & 0.14 & 0.06 & 0.06 & 0.06 & 0.05 & 0.07   \\\hline
$\frac{3}{4}N(\bfm{0},\bfm{\Sigma_{0}})+\frac{1}{4}N(\bfm{2},\bfm{\Sigma_{0.5}})$  & $n=20$          & 0.11 & 0.07 & 0.07 & \bf 0.15 &\bf  0.15 &\bf  0.15 & 0.13 & 0.13 & 0.06 & 0.06 & 0.06 & 0.06 & 0.06   \\
				 & $n=50$          & 0.35 & 0.06 & 0.06 & 0.51 & 0.54 & \bf 0.57 & \bf 0.56 & 0.25 & 0.08 & 0.08 & 0.08 & 0.07 & 0.08   \\\hline 
\end{tabular}
}
\end{center}
\end{sidewaystable}

\pagebreak


\begin{thebibliography}{99}


\bibitem[\protect\astroncite{Bartlett}{1939}]{ba1} Bartlett, M.S. (1939), {A note on tests of significance in multivariate analysis}, \emph{Mathematical Proceedings of the Cambridge Philosophical Society}, Vol. 35, pp. 180-185






\bibitem[\protect\astroncite{Doornik \&\ Hansen}{2008}]{dh1} Doornik, J.A., Hansen, H. (2008), An omnibus test for univariate and multivariate normality, \emph{Oxford Bulletin of Economics and Statistics}, Vol. 70, pp. 927-939

\bibitem[\protect\astroncite{Dubkov \&\ Malakhov}{1976}]{dm1} Dubkov, A. A., Malakhov, A. N. (1976), {Properties and interdependence of the cumulants of a random variable}, \emph{Radiophysics and Quantum Electronics}, Vol. 19, pp. 833-839

\bibitem[\protect\astroncite{Henderson \& Searle}{1979}]{hen1} Henderson, H.V., Searle, S.R. (1979), {Vec and vech operators for matrices, with some uses in Jacobians and multivariate statistics}, \emph{The Canadian Journal of Statistics}, Vol. 7, pp. 65-81

\bibitem[\protect\astroncite{Henze}{2002}]{he1} Henze, N. (2002), {Invariant tests for multivariate normality: a critical review}, \emph{Statistical Papers}, Vol. 43, pp. 467-506

\bibitem[\protect\astroncite{Kankainen et~al.}{2007}]{kth1} Kankainen, A., Taskinen, S., Oja, H. (2007), {Tests of multinormality based on location vectors and scatter matrices}, \emph{Statistical Methods \& Applications}, Vol. 16, pp. 357-359

\bibitem[\protect\astroncite{Kaplan}{1952}]{ka1} Kaplan, E.L. (1952), {Tensor notation and the sampling cumulants of k-statistics}, \emph{Biometrika}, Vol. 39, pp. 319-323


\bibitem[\protect\astroncite{Kollo \& von Rosen}{2005}]{rosen1} Kollo, T., von Rosen, D. (2005), \emph{Advanced multivariate statistics with matrices}, Springer-Verlag, ISBN 978-1-4020-3418-3

\bibitem[\protect\astroncite{Kollo}{2008}]{kollo1} Kollo, T. (2002), {Multivariate skewness and kurtosis measures with an application in ICA}, \emph{Journal of Multivariate Analysis}, Vol. 99, pp. 2328-2338


\bibitem[\protect\astroncite{Kotz et al.}{2000}]{kotz1} Kotz, S., Kozubowski, T.J., Podg\'{o}rski, K. (2000), {An asymmetric multivariate Laplace distribution}, Technical Report No. 367, Department of Statistics and Applied Probability, University of California at Santa Barbara

\bibitem[\protect\astroncite{Kshirsagar}{1972}]{ksh1} Kshirsagar, A.M. (1972), \emph{Multivariate Analysis}, Marcel Dekker, ISBN 0-8247-1386-9




\bibitem[\protect\astroncite{Lin \&\ Mudholkar}{1980}]{lm1} { Lin, C.-C., Mudholkar, G.S.} (1980), A simple test for normality against asymmetric alternatives, \emph{Biometrika}, Vol. 67, 455-61

\bibitem[\protect\astroncite{Mardia}{1970}]{ma1} Mardia, K.V. (1970), {Measures of multivariate skewness and kurtosis with applications}, \emph{Biometrika}, Vol. 57, pp. 519-530

\bibitem[\protect\astroncite{Mardia}{1974}]{ma2} Mardia, K.V. (1974), {Applications of some measures of multivariate skewness and kurtosis in testing normality and robustness studies}, \emph{Sankhya: The Indian Journal of Statistics}, Vol. 36, pp. 115-128

\bibitem[\protect\astroncite{Mardia et~al.}{1979}]{ma3} Mardia, K.V., Kent, J.T., Bibby, J.M. (1979), \emph{Multivariate Analysis}, Academic Press, ISBN 0-12-471250-9

\bibitem[\protect\astroncite{Mardia \&\ Kent}{1991}]{mk1} Mardia, K.V., Kent, J.T. (1991), {Rao score tests for goodness of fit and independence}, \emph{Biometrika}, Vol. 78, pp. 355-363


\bibitem[\protect\astroncite{McCullagh}{1987}]{mcc2} McCullagh, P. (1987), \emph{Tensor Methods in Statistics}, University Press, ISBN 0-412-27480-9

\bibitem[\protect\astroncite{Mecklin \&\ Mundfrom}{2004}]{mm1} Mecklin, C.J., Mundfrom, D.J. (2004), {An appraisal and bibliography of tests for multivariate normality}, \emph{International Statistical Review}, Vol. 72, pp. 123-128

\bibitem[\protect\astroncite{Mecklin \&\ Mundfrom}{2005}]{mm2} Mecklin, C.J., Mundfrom, D.J. (2005), {A Monte Carlo comparison of the type I and type II error rates of tests of multivariate normality}, \emph{Journal of Statistical Computation and Simulation}, Vol. 75, pp. 93-107

\bibitem[\protect\astroncite{Mudholkar et~al.}{2002}]{mud2} { Mudholkar, G.S., Marchetti, C.E., Lin, C.T.} (2002), Independence characterizations and testing normality against restricted skewness-kurtosis alternatives, \emph{Journal of Statistical Planning and Inference}, Vol. 104, 485-501

\bibitem[\protect\astroncite{Thulin}{2010}]{er1} Thulin, M. (2010), {On two simple tests for normality with high power}, pre-print, arXiv:1008.5319



\end{thebibliography}
\end{document}